\documentclass{article}
\usepackage{graphicx}
\usepackage{amsmath, amssymb,amsthm}
\usepackage{subcaption}
\usepackage{float}
\usepackage{booktabs}
\usepackage{array}
\usepackage{xcolor}
\usepackage{esint}
\usepackage{enumitem}
\usepackage{listings}
\usepackage[cachedir=minted-cache]{minted}
\usepackage[toc,page]{appendix}
\usepackage{lipsum} 
\usepackage{url}
\usepackage[a4paper, margin=0.75in]{geometry}
\usepackage[colorlinks=true, linkcolor=blue, citecolor=blue, urlcolor=blue]{hyperref}
\newcommand{\ii}{\mathrm{i}}
\usepackage{graphicx}
\usepackage{subcaption} 

\theoremstyle{definition}
\newtheorem{definition}{Definition}[section]

\newtheorem{theorem}{Theorem}
\usepackage{algorithm}
\usepackage{algpseudocode}
\usepackage[font=small]{subcaption}
\usepackage[font=small,labelfont=bf]{caption}
\newcommand{\vect}[1]{\mathbf{#1}}

\usepackage{authblk}

\title{Density Functions and Random Number Generators of $\alpha$-Stable Distributions}
\author[1]{Wael Tabbara}
\author[2]{Sharafeddine Sharafeddine\thanks{The majority of this author's contributions were completed while affiliated with the American University of Beirut.}}
\author[1]{Ahmad El-Hajj}
\author[1]{Jihad Fahs}
\author[1]{Ibrahim Abou-Faycal}
\affil[1]{Electrical \& Computer Eng. Department, American University of Beirut, Lebanon}
\affil[2]{School of Electrical \& Computer Engineering, Cornell University, USA}
\date{}

\begin{document}

\maketitle


\begin{abstract}
Heavy-tailed distributions are increasingly found to better fit empirical data in engineering, finance, physics, network science, and related fields. Among them, $\alpha$-stable distributions play a central role being limiting laws in the generalized central limit theorem: they are expected to be exceptionally good models whenever sums of multiple independent heavy-tailed sources are at play. Despite their theoretical importance, their practical use remains challenging: $\alpha$-stable probability densities generally do not have closed-form expressions, and numerical evaluation and random variate generation can be difficult, especially in the multivariate setting. This paper presents AUB-HTP, a Python package for numerical computation and simulation of $\alpha$-stable distributions. The package provides scalar density evaluation using several complementary methods, including Zolotarev-type integral representations, series formulas, and numerical inversion of characteristic functions. It also provides random variate generation for scalar and multivariate $\alpha$-stable distributions, with support for flexible spectral measures through LePage series representations. Numerical experiments demonstrate that AUB-HTP improves the accuracy, stability, and parameter coverage of existing tools for scalar density computation, while adding new capabilities for multivariate simulation. The package is designed to support reproducible computational work involving heavy-tailed models across a broad range of scientific applications.
\end{abstract}

\section{Introduction}

Heavy-tailed distributions arise naturally in many scientific and engineering applications, including communications, signal processing, finance, physics, and network modeling~\cite{marcel2008,Tsih1995,crovella1998heavy,nolancirc2013,taleb2022statisticalconsequencesfattails}. In such settings, rare but large deviations occur more frequently than predicted by light-tailed models such as the Gaussian distribution. As a result, Gaussian-based modeling can significantly underestimate the probability and impact of extreme events.

Among heavy-tailed models, $\alpha$-stable distributions occupy a central theoretical position. They generalize the Gaussian distribution through the stability property: sums of independent $\alpha$-stable random variables remain $\alpha$-stable, up to location and scale changes. They also arise as limiting distributions for normalized sums of independent random variables under the generalized central limit theorem~\cite{kolmo1968}. These properties make $\alpha$-stable laws natural candidates for modeling aggregate effects generated by many independent heavy-tailed sources.

A major practical limitation of $\alpha$-stable distributions is that their probability density functions (PDF)s generally do not admit closed-form expressions, except in the special cases of the Cauchy and Lévy distributions. Consequently, numerical tools are required for density evaluation, modeling, and simulation. Existing approaches for scalar $\alpha$-stable random variate generation, such as the Chambers--Mallows--Stuck (CMS) method~\cite{ChambersJ.M.1976AMfS}, are well established, but they do not directly address general multivariate $\alpha$-stable distributions. In the multivariate case, available simulation methods typically focus on special spectral measures, such as isotropic or sub-Gaussian models~\cite{nolancirc2013}, rather than arbitrary spectral measures. 

Available numerical packages for $\alpha$-stable distributions include \cite{Royuela-del-ValJavier2017l:FP} that provides an R and C++ implementation for $\alpha$-stable random variable generation using a modified version of the CMS method~\cite{ChambersJ.M.1976AMfS,NolanJohnP2020USDM}. In \cite{teimouri2018alphastablerpackagemodelling} the authors provide an R package that supports generating an $\alpha$-stable random variable using the same modified CMS method; a random variable from a truncated $\alpha$-stable distribution distribution; an $\alpha$-stable random vector with a sub-Gaussian spectral measure, and other bivariate $\alpha$-stable random vectors. The work in \cite{RJ-2022-056} presents another $\alpha$-stable package which references \cite{teimouri2018alphastablerpackagemodelling} as well. When it comes to density computations, the \texttt{libstable} package \cite{Royuela-del-ValJavier2017l:FP} is based on Zolotarev's integral representation \cite{Zolotarev1986} computed using Gauss-Kronrod quadrature over the entire parameter space. The package does the evaluations in C/C++ and provides a MATLAB front-end in addition to an R package. The \texttt{alphastable} R package \cite{teimouri2018alphastablerpackagemodelling} evaluates the density using Skorohod-type asymptotic series \cite{sko2,Pollard1946LaplaceIntegral,Bergstrom1952ExpansionsStable} wherever those series converge. The package falls back to \texttt{stabledist} \cite{Wuertz2016Stabledist} elsewhere, which computes the density via Zolotarev's integral representation \cite{Zolotarev1986} using adaptive quadrature.

This paper presents AUB-HTP, a Python package for computational work with $\alpha$-stable distributions. The package provides scalar density computation using several numerical methods, including Zolotarev integral representations~\cite{Zolotarev1986}, Skorohod-Pollard-Bergstr{\"o}m formulas~\cite{sko2,Pollard1946LaplaceIntegral,Bergstrom1952ExpansionsStable}, and direct numerical inversion of the \textit{characteristic function}\footnote{The characteristic function of a random variable $X$ is given by the expected value of $e^{\ii tX}$, 
\(
\varphi_X(u) = \mathbb{E}[e^{\ii uX}].
\)
It uniquely determines the distribution of $X$, and it always exists.}. These methods provide complementary tradeoffs in terms of accuracy, speed, and robustness across parameter regimes. In addition, AUB-HTP implements random variate generation for scalar and multivariate $\alpha$-stable distributions. For the multivariate case, the package uses LePage series representations~\cite{LePageRaoul1981CtaS}, which support flexible spectral measures and perform particularly well in challenging heavy-tailed regimes, including small values of $\alpha$.

The goal of AUB-HTP is to make $\alpha$-stable modeling more accessible to computational scientists by providing reliable density evaluation, flexible simulation tools, and reproducible numerical routines within the Python ecosystem. The main contributions of this work are: (i) a unified Python implementation of several scalar $\alpha$-stable PDF evaluation methods; (ii) a flexible random variate generator for scalar and multivariate $\alpha$-stable distributions; (iii) support for general spectral measures through LePage series; and (iv) numerical benchmarks comparing accuracy, runtime, and robustness against existing implementations.

\paragraph*{Notations}
\label{sect : notations}
We adopt the following conventions. The vector $\mathbf{x}\in\mathbb{R}^d$ represents a column vector and $\mathbf{x}^T$ is its transpose. The Euclidean inner product of two vectors  $\mathbf{x} = (x_1,x_2,\cdot\cdot\cdot,x_d)^T$ and $\mathbf{y} = (y_1,y_2,\cdot\cdot\cdot,y_d)^T$ is denoted by $\mathbf{x}^T\mathbf{y}=\langle\mathbf{x},\mathbf{y}\rangle=\sum_{i=1}^{d}x_iy_i $ and the $L_2$ norm of $\mathbf{x}$ by $\|\mathbf{x}\|=\sqrt{\langle\mathbf{x},\mathbf{x}\rangle}$. The unit sphere in $\mathbb{R}^d$ is denoted by $\mathbb{S}^{d-1}=\{\mathbf{x}\in\mathbb{R}^d: \|\mathbf{x}\|=1\}$. The class of all borel subsets in a metric space $S$ is represented by $\mathcal{B}(S)$. 
We use $\mathbf{X} \perp\!\!\!\perp \mathbf{Y}$ to indicate that the two random vectors $\mathbf{X}$ and $\mathbf{Y}$ are independent. 
When a random vector $\mathbf{X}$ is distributed uniformly on the unit sphere, we write $\mathbf{X}\sim\mathcal{U}(\mathbb{S}^{d-1})$. For a random variable $X$ that is Gamma distributed, we write $X\sim\text{Gamma}(\alpha,\beta)$, and we use the following parameterization for the PDF \begin{equation*}
    f_{X}(x)=\begin{cases}
        \displaystyle \frac{1}{\Gamma(\alpha)\beta^{\alpha}} \, x^{\alpha-1} e^{-\frac{x}{\beta}} & \text{if}\; x>0\\
        0 & \text{otherwise},
    \end{cases}
\end{equation*} 
where $\Gamma(\cdot)$ is the Gamma function. 


\section{Preliminaries on $\alpha$-Stable Distributions}
For a definition of multivariate stable vectors, we refer the reader to \cite[Definition 2.1.1]{SamorodnitskyGennady2000Snrp}.
In particular, we are interested in $\alpha$-stable~\footnote{The term $\alpha$-stable is used to refer to the stable family excluding the Gaussian law.} random vectors. 
 
 \begin{definition}~\cite{SamorodnitskyGennady2000Snrp}
 \label{def:astable}
 Let $0<\alpha<2$, then $\mathbf{X}=(X_1,X_2,\cdot\cdot\cdot,X_d)$ is an $\alpha$-stable random vector in $\mathbb{R}^d$ if and only if there exists a finite measure \(\Lambda\) on $\mathbb{S}^{d-1}$ called the ``spectral" measure and a vector \(\boldsymbol{\mu}^{0}\) in $\mathbb{R}^d$ such that the characteristic function of $\mathbf{X}$ is given by 
 \begin{equation}\label{eq:eq3}
    \phi_{\mathbf{X}}(\mathbf{t}) = \mathbb{E} \left[ \exp\{\ii \langle \mathbf{t},\mathbf{X}\rangle\} \right]= \exp\left\{-\int_{\mathbb{S}^{d-1}} \psi_\alpha(\langle \mathbf{t}, \mathbf{s} \rangle) \, \Lambda(d\mathbf{s}) + \ii \langle \mathbf{t},\boldsymbol{\mu}^0 \rangle\right\},
  \end{equation}
    and where \begin{equation*}
    \psi_\alpha(u) = 
\begin{cases}
\displaystyle |u|^\alpha \left[ 1 - \ii \operatorname{sign}(u) \tan \frac{\pi \alpha}{2} \right] & \alpha \ne 1 \\
\displaystyle \, |u| \left[ 1 + \ii \frac{2}{\pi} \operatorname{sign}(u) \ln |u| \right] & \alpha = 1.
\end{cases}
\end{equation*}
\end{definition}

In the scalar case ($d = 1$) --and whenever $0<\alpha<2$, the spectral measure boils down to two mass points on $s = \pm 1$, and the characteristic function in~\eqref{eq:eq3} is expressible in terms of 4 parameters: the stability index $0<\alpha<2$, skewness $-1\leq\beta\leq1$, scale $\gamma>0$, and location $\delta\in\mathbb{R}$. In this case, we denote an \(\alpha\)-stable random variable $X$ using the same notation as in Nolan~\cite{NolanJohnP2020USDM}, \(X\!\sim\!\mathcal{S}(\alpha,\beta,\gamma,\delta;k)\) where $k$ is the parameterization index (or type). We will use the symbols $\mathcal{S}_0$ and $\mathcal{S}_1$ to refer to the two common type-0 and type-1 parameterization, respectively. In the multivariate case, we denote a $d$-dimension non-degenerate $\alpha$-stable random vector \(\vect{X}\) with stability index $\alpha$, spectral measure $\Lambda$ and shift vector $\boldsymbol{\mu}^0$, by $\vect{X}\sim\mathbf{S}_\alpha(\Lambda,d,\boldsymbol{\mu}^0)$.
Note that Definition~\ref{def:astable} corresponds to the generalization of the scalar case under the parameterization $k=1$.

\subsection*{Parameterizations of scalar $\alpha$-stable distributions}

The two parameterizations, 0 and 1 are denoted by $\mathcal{S}(\alpha, \beta, \gamma, \delta; 0)$ and $\mathcal{S}(\alpha, \beta, \gamma, \delta; 1)$ respectively and differ primarily in how they define the location (shift) of the distribution, particularly near $\alpha = 1$. For a given $\alpha$ and $\beta$, let $Z$ be a ``standard" $\alpha$-stable variable corresponding to $\gamma = 1$ and $\delta = 0$.

\begin{table}[!htb]
\begin{center}
\renewcommand{\arraystretch}{1.5}

\begin{tabular}{|l|l|l|} \hline
$X \sim \mathcal{S}(\alpha, \beta, \gamma, \delta; 0)$ & Characteristic Function: $\mathbb{E}[e^{iuX}]$ & $\alpha$ \\ \hline \hline
$X \overset{d}{=} \gamma \left( Z - \beta \tan\left( \frac{\pi \alpha}{2} \right) \right) + \delta$ & $\exp\left( -|\gamma u|^\alpha \left[ 1 + \ii\beta \tan\left( \frac{\pi \alpha}{2} \right) \operatorname{sign}(u)(|\gamma u|^{1 - \alpha} - 1) \right] + \ii \delta u \right)$ & $\alpha \neq 1$ 
\\ 
$X \overset{d}{=} \gamma Z + \delta$ & $\exp\left( -|\gamma u| \left[ 1 + \ii\beta \frac{2}{\pi} \operatorname{sign}(u) \log|\gamma u| \right] + \ii \delta u \right)$ & $\alpha = 1$
\\ \hline
 \multicolumn{3}{c}{} \\ \hline
$X \sim \mathcal{S}(\alpha, \beta, \gamma, \delta; 1)$ & Characteristic Function: $\mathbb{E}[e^{iuX}]$ & $\alpha$ \\ \hline \hline
$X \overset{d}{=} \gamma Z + \delta$ & $\exp\left( -|\gamma u|^\alpha \left[ 1 - \ii\beta \tan\left( \frac{\pi \alpha}{2} \right) \operatorname{sign}(u) \right] + \ii \delta u \right)$ & $\alpha \neq 1$ 
\\ 
$X \overset{d}{=} \gamma Z + \left( \delta + \beta \frac{2}{\pi} \gamma \log \gamma \right)$ & $\exp\left( -|\gamma u| \left[ 1 + \ii\beta \frac{2}{\pi} \operatorname{sign}(u) \log|u| \right] + \ii \delta u \right)$ & $\alpha = 1$
\\ \hline
\end{tabular}
\caption{Parametrization 0 and Parametrization 1. \label{eq:Type_!}}
\renewcommand{\arraystretch}{1}
\end{center}
\end{table}

Parameterization 0 has the advantage of avoiding discontinuities near $\alpha = 1$ by adjusting the shift behavior and is often preferred for numerical stability~\cite{NolanJohnP2020USDM}.
On the other hand, explicit adjustment to the location is done when $\alpha = 1$ for parameterization 1.

Note that the choice of parameterization directly affects the shift and symmetry of the distribution, especially when $\beta \neq 0$ and $\alpha$ is close to $1$. In both forms, the distribution can be standardized by setting $\gamma = 1$ and $\delta = 0$, in which case we use the abbreviations $\mathcal{S}(\alpha, \beta; 0)$ and $\mathcal{S}(\alpha, \beta; 1)$ respectively. 

In the univariate case, our package can simulate both parameterizations $k=0,1$.

\section{Scalar Density Evaluation}

The AUB-HTP package provides numerical evaluation of scalar $\alpha$-stable probability density functions through a SciPy-like interface. This component complements the random number generator: the generator produces samples, while the density module provides the reference curve used for model verification, visualization, and numerical comparison.

Except for  Cauchy and L\'evy laws, $\alpha$-stable distributions do not have a closed-form density, which must therefore be computed using numerical inversion or analytic representations that are valid in specific parameter regions.

\smallskip

Towards fulfilling our main objective of reducing expensive quadrature calls while preserving accuracy across a wide range of $\alpha$, $\beta$, and sample values, the density evaluator combines three methods:
\begin{itemize}
\item[(a)] direct inversion of Nolan's $\mathcal{S}_1$ characteristic function near the mode \cite{NolanJohnP2020USDM}
\item[(b)] Zolotarev's integral representation in the middle range \cite{Zolotarev1986}
\item[(c)] Skorohod-Pollard-Bergstr{\"o}m's series expansions in the tails \cite{sko2, Wintner1941CauchySingularities, Pollard1946LaplaceIntegral}.
\end{itemize}

\subsection{Input standardization}

All density calls are first mapped to a standardized variable allowing the internal routines to work on standardized densities while still supporting user-facing location and scale parameters. More specifically, given an input sample value $x$ the code evaluates
\begin{equation}
z=\frac{x-\text{shift}}{\gamma},
\label{eq:shift}
\end{equation}
then rescales the final density by $1/\gamma$. For the $\mathcal{S}_1$ parameterization for example, the shift is given by \cite[equation 1.5]{NolanJohnP2020USDM}
\[
\text{shift} =
\begin{cases}
\delta, & \alpha \neq 1\\[6pt]
\delta + \beta \dfrac{2}{\pi} \gamma \log \gamma & \alpha = 1,
\end{cases}    
\]
as can be deduced from Table~\ref{eq:Type_!}.

Additionally, if $f_{\alpha,\beta}(x)$ is the PDF of a normalized $\alpha$-stable random variable, for negative evaluation points the code uses the symmetry relation \cite[proposition 1.1]{NolanJohnP2020USDM}
\[
f_{\alpha,\beta}(-x)=f_{\alpha,-\beta}(x),
\]
and most formulas only need to be evaluated on the positive half-line.

\subsection{Hybrid evaluation strategy}

The main density function acts as a dispatcher. It divides the input array into regions and applies the method that is most accurate for each region.
\begin{itemize}
\item For small $|z|$ (see equation~\eqref{eq:shift}), the package uses direct inversion of the characteristic function. While this method is slower than the series expansions, it is stable near the mode where tail formulas are not accurate.
\item For intermediate values of $|z|$, the package uses Zolotarev's integral representation. This provides a useful compromise between direct inversion and asymptotic series methods.
\item For large $|z|$, the package uses Skorohod-Pollard-Bergstr{\"o}m series' expansions. These formulas are fast in the tails and avoid repeated numerical integration.
\end{itemize}

The switching points between methods are not fixed constants. They are instead selected from precomputed cutoff functions depending on $(\alpha,\beta)$. These cutoffs were obtained using grid-based comparison with \texttt{scipy.stats.levy\_stable.pdf}. At runtime, the code queries the corresponding cutoff and applies Boolean masks to route each input point to the selected method.

\subsubsection*{Density formulas}

$\bullet$ For direct inversion in the $\mathcal{S}_1$ parameterization~\cite{NolanJohnP2020USDM}, the density is computed from the characteristic function. The standardized density is
\begin{align*}
    & f_{\alpha,\beta}(x) = \frac{1}{\pi} \int_0^\infty e^{-t^\alpha} \cos\!\left( xt-\beta\tan\left(\frac{\pi\alpha}{2}\right)t^\alpha\right) \,dt 
    \quad \alpha \neq 1 \\
    & f_{1,\beta}(x) = \frac{1}{\pi} \int_0^\infty e^{-t} \cos\!\left( xt+\frac{2}{\pi}\beta t\log t \right) \,dt 
    \qquad \qquad \alpha = 1, \text{ with the logarithmic correction.}
\end{align*}
These integrals are evaluated with \texttt{scipy.integrate.quad\_vec}.

\smallskip
\noindent
$\bullet$ For the middle region, the code uses Zolotarev's integral representation~\cite{Zolotarev1986}. For $\alpha\neq1$ and $x>0$,
\[
f_{\alpha,\beta}(x) = \frac{\alpha x^{\frac{1}{\alpha-1}}} {\pi |\alpha-1|}
\int_{-\theta_0}^{\pi/2} V(\theta) \exp\!\left[ -x^{\frac{\alpha}{\alpha-1}}V(\theta)\right]\,d\theta,    
\]
where
\[
\theta_0= \frac{1}{\alpha}\arctan\!\left(\beta\tan\left(\frac{\pi\alpha}{2}\right)\right),
\]
and where $V(\theta)$ is evaluated following Zolotarev's formula~\cite{Zolotarev1986}. This integral is computed using the function \texttt{scipy.integrate.quad}. 

\smallskip
\noindent
$\bullet$ For the tail region, we use Skorohod-type series expansions \cite{sko2, Wintner1941CauchySingularities, Pollard1946LaplaceIntegral} of the form
\[
f_{\alpha,\beta}(x) \approx \frac{1}{\pi x} \sum_{n=1}^{N} a_n x^{-\alpha n},    
\]
where
\[
a_n = \frac{(-1)^{n-1}\Gamma(n\alpha+1)}{n!} \left( 1+\beta^2\tan^2\left(\frac{\pi\alpha}{2}\right) \right)^{n/2} \sin\!\left[n\left(
\frac{\pi\alpha}{2} + \arctan\!\left( \beta\tan\left(\frac{\pi\alpha}{2}\right) \right) \right) \right].    
\]
The number $N$ of terms used to calculate the series is appropriately chosen depending on the parameter range. In the current package, the $\alpha<1$ branch uses a larger truncation level, while the $\alpha>1$ branch requires fewer terms in the tested regimes.

In some cases, minor spikes occurred at disjoint points away from the mode due to numerical irregularities. These spikes were detected by checking locations where monotonicity is violated. This is resolved by applying --after the density is computed, a smoothing step enforcing monotonicity of $f_{\alpha,\beta}$ on either side of the mode.

\subsection{Validation}

The density evaluator was tested against \texttt{scipy.stats.levy\_stable.pdf} on grids of $\alpha$, $\beta$, and $x$. The near-mode tests used $x\in[-10,10]$, while the tail tests examined values up to $x = 100$.

The metrics listed in Table~\ref{tab:error-metrics} were used to compare the AUB-HTP PDF generator with that of SciPy.
\begin{table}[!htbp]
\centering
\renewcommand{\arraystretch}{2.2}
\begin{tabular}{@{}p{4cm} >{\centering\arraybackslash}m{7cm}@{}}
\toprule
\textbf{Metric} & \textbf{Formula} \\
\midrule
Maximum absolute error &
$\displaystyle \max_{1 \leq i \leq n}\left|\hat{f}(x_i)-f_{\mathrm{SciPy}}(x_i)\right|$ \\
Maximum relative error &
$\displaystyle \max_{1 \leq i \leq n}\frac{\left|\hat{f}(x_i)-f_{\mathrm{SciPy}}(x_i)\right|}{\max\left(\left|f_{\mathrm{SciPy}}(x_i)\right|,\varepsilon\right)}$ \\
Mean absolute error &
$\displaystyle \frac{1}{n}\sum_{i=1}^{n}\left|\hat{f}(x_i)-f_{\mathrm{SciPy}}(x_i)\right|$ \\
Runtime speedup factor against SciPy &
$\displaystyle \frac{T_{\mathrm{SciPy}}}{T_{\mathrm{AUB\text{-}HTP}}}$ \\
\bottomrule
\end{tabular}
\caption{Error and performance metrics comparing $\hat{f}$ against the SciPy reference $f_{\mathrm{SciPy}}$ over $\{x_i\}_{i=1}^{n}$, with $\varepsilon>0$ guarding against division by zero.}
\label{tab:error-metrics}
\end{table}

The tests show close agreement across most of the parameter space, but they also reveal regions where the SciPy reference exhibits irregular numerical behavior. The strongest agreement occurs away from the narrow region around $\alpha=1$, where changes in the parameterization and the presence of logarithmic terms make density evaluation more sensitive. In this region, visual diagnostics are important because the relative difference may become large when the reference density is close to zero or numerically unstable. Figure~\ref{fig:pdf-error-near-mode} illustrates this behavior near $\alpha=1$ using the maximum relative difference measured around the mode. Away from $\alpha=1$, Figure~\ref{fig:pdf-error-near-mode} shows agreement with the SciPy reference.  Results for the remaining error metrics, parameter ranges, and tail evaluations, together with the complete benchmark data, are available in the Github repository\footnote{https://github.com/AUB-HTP/AUB-HTP/tree/main/benchmarks}. 
\begin{figure}[H]
\centering
\includegraphics[width=0.8\linewidth]{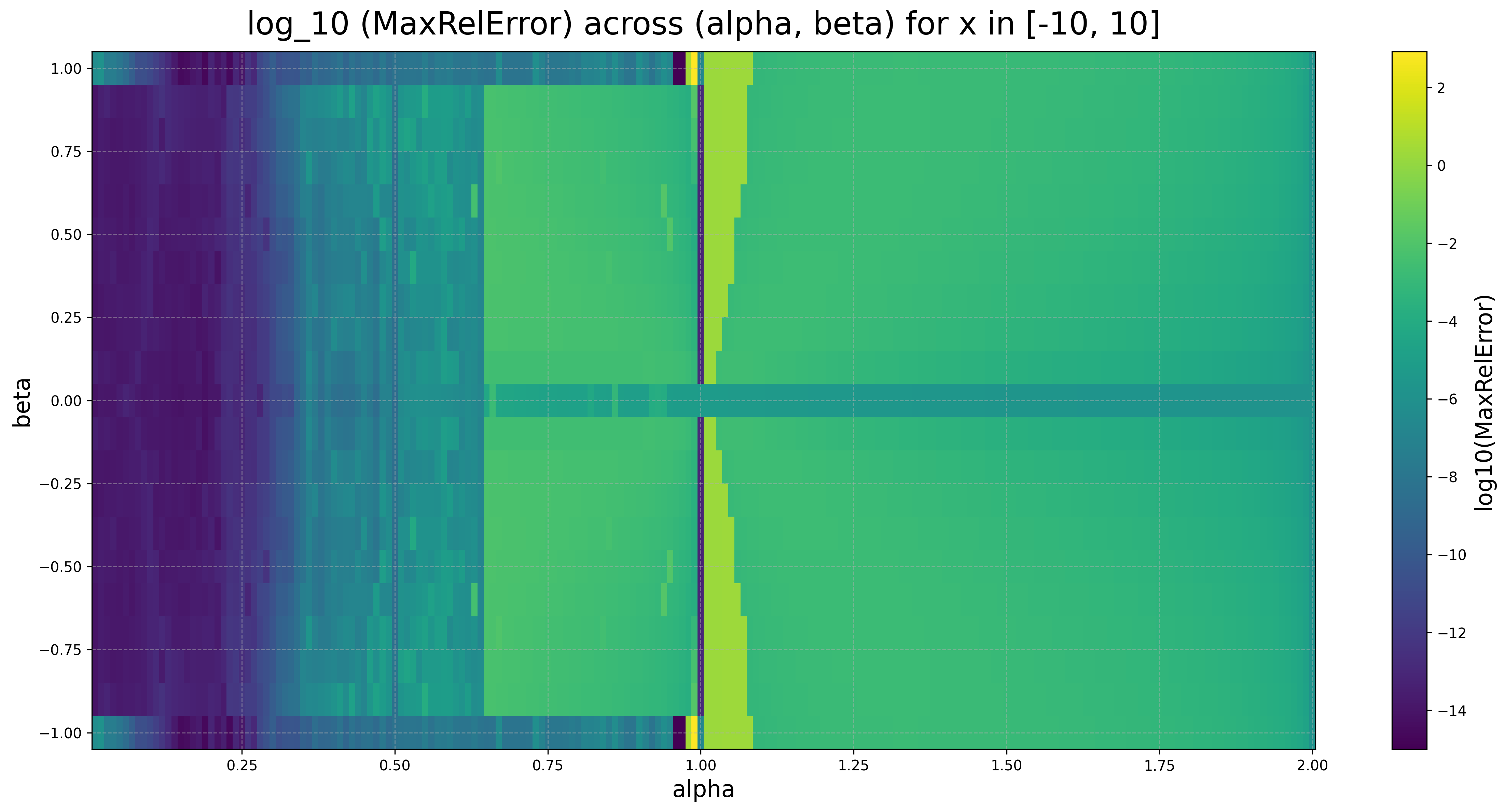}
\caption{Maximum relative difference across $(\alpha,\beta)$ near the mode. Larger values near $\alpha=1$ mark the most sensitive numerical region.}
\label{fig:pdf-error-near-mode}
\end{figure}

\subsection{Discrepancies in the SciPy reference}

While \texttt{scipy.stats.levy\_stable.pdf} \cite{SciPyLevyStableManual} is used as the main numerical reference, we observed that it has some discrepancies for certain parameter values. These discrepancies appear mostly near the sensitive region around $\alpha=1$, and they can create misleading relative difference values.

The main issue is that the SciPy PDF sometimes shows unusual zero or near-zero regions in some places while the densities are known to be smooth. Since the maximum relative difference contains a division by the reference value, even a small disagreement in such regions can produce a very large relative difference. Therefore, some of the large relative differences in the benchmark tables do not necessarily mean that our PDF generator is inaccurate. Instead, they are likely due to the instability or discontinuity in the reference PDF.

To validate this, we generated visual diagnostic plots comparing the AUB-HTP with the \texttt{levy\_stable} PDF values. Figure~\ref{fig:stable-grid-zero-windows} shows several cases where the SciPy reference generates unexpected zero windows or sharp changes, while the PDF generated using our package remains smooth.

\begin{figure}[H]
\centering
\includegraphics[width=\linewidth]{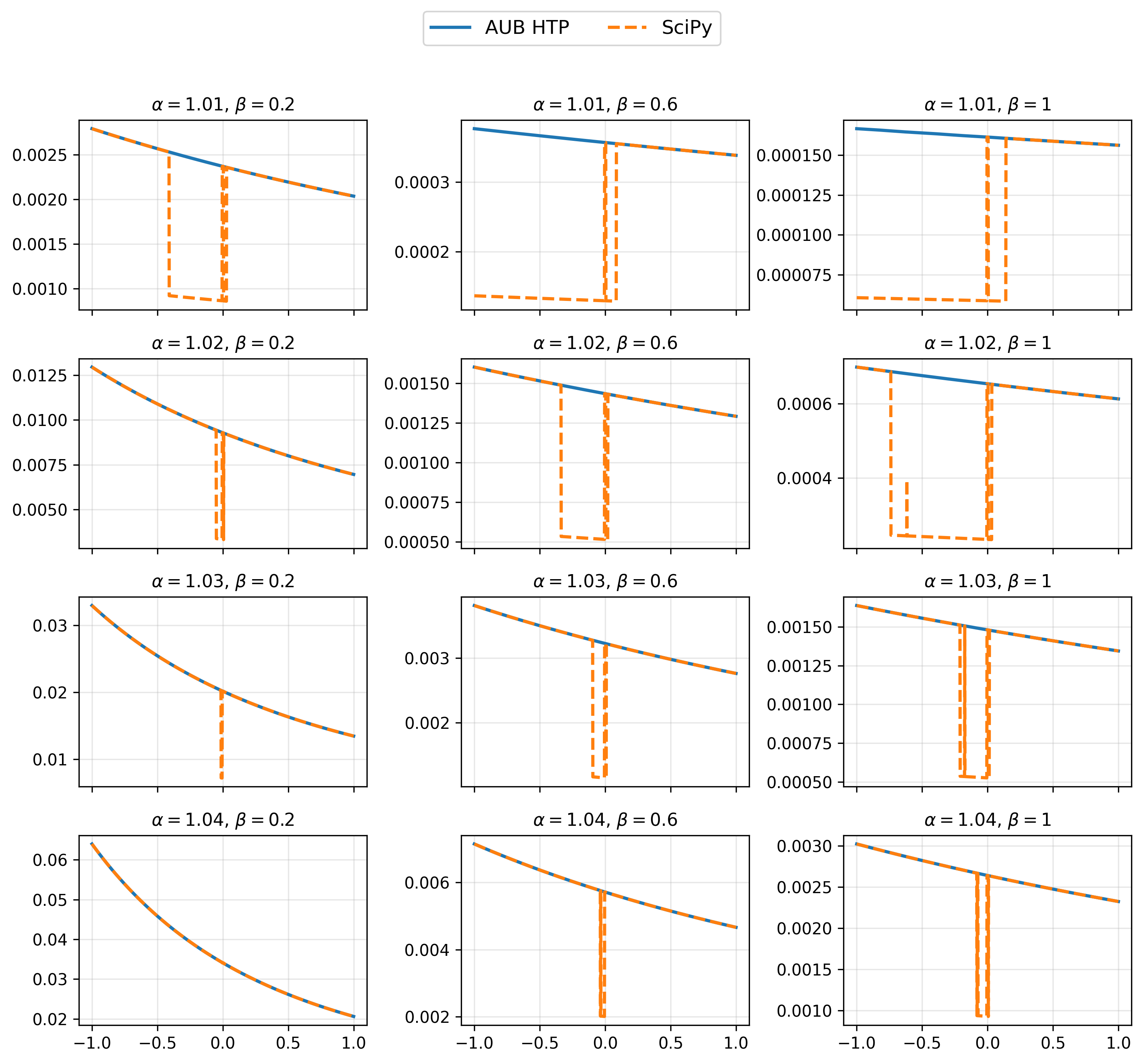}
\caption{Visual comparison between AUB-HTP PDF generator and \texttt{scipy.stats.levy\_stable.pdf}. The plots zoom into regions where the SciPy reference becomes numerically zero or behaves irregularly. These regions explain why some relative difference values become large even when our generated density remains smooth.}
\label{fig:stable-grid-zero-windows}
\end{figure}

The heatmaps are useful for locating sensitive parameter regions, but the diagnostic plots were needed to understand whether the difference is primarily attributed to our method or the numerical reference.

\subsection{Runtime}

The adopted hybrid strategy reduces runtime by avoiding direct quadrature when a faster formula is accurate enough. The largest gains occur in the tail region, where series expansions replace repeated numerical integration.

In the benchmark runs, the median speedup was approximately $58\times$, with larger speedups in tail-heavy evaluations. These gains come from three design choices:

\begin{itemize}
\item vectorized array operations
\item region-specific method dispatch
\item reuse of precomputed cutoff functions.
\end{itemize}

\begin{figure}[H]
\centering
\includegraphics[width=0.9\linewidth]{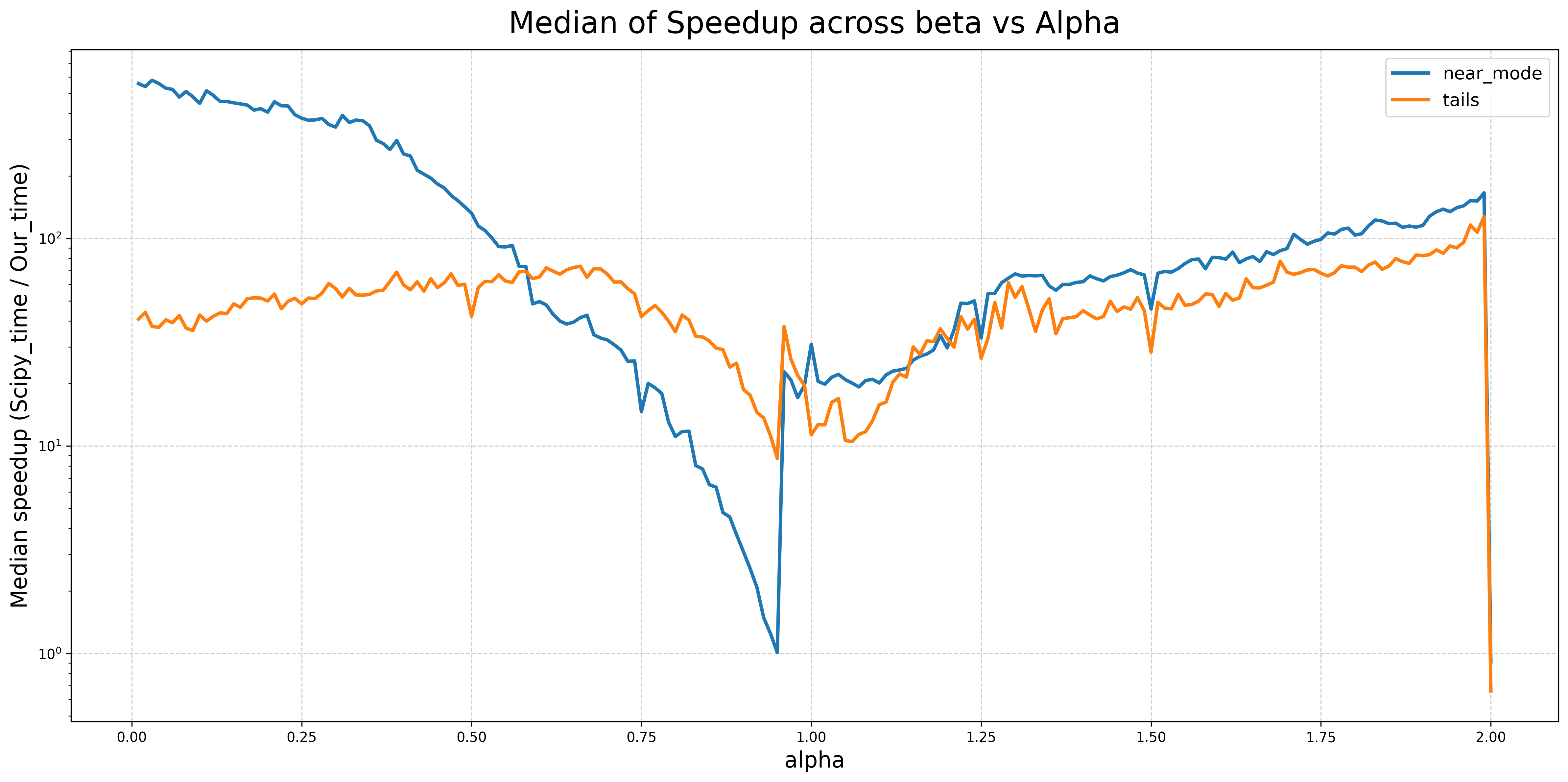}
\caption{Median runtime speedup relative to \texttt{scipy.stats.levy\_stable.pdf}. The largest gains occur where the series expansions are selected.}
\label{fig:pdf-speedup}
\end{figure}

The speedup plot in Figure~\ref{fig:pdf-speedup} shows that our method is strictly faster than \texttt{scipy.stats.levy\_stable.pdf} for almost all tested values of $\alpha$. Note that the vertical axis is on a logarithmic scale and values between $10$ and $100$ represent large runtime improvements. For small values of $\alpha$, especially in the near-mode range, the speedup is very high. This is because the method selection avoids unnecessary numerical integration and uses the formulas that are faster in those regions. In the tail range, the speedup is also consistent for most values of $\alpha$, since the Skorohod-Pollard-Bergstr{\"o}m  formulas replace repeated quadrature calls. The smallest speedups occur near $\alpha=1$ which is expected since this region is numerically delicate. The formulas change around $\alpha=1$, and the characteristic function contains a logarithmic correction term. The package therefore relies more on direct inversion in this region, which is slower than the series-based methods.




\section{Generating $\alpha$-Stable Random Vectors}
In the univariate case, the package implements a numerically stable version of the Chambers, Mallows, and Stuck method which is presented in~\cite{NolanJohnP2020USDM,WERON1996165}, and it is the same method as in \cite{Royuela-del-ValJavier2017l:FP,teimouri2018alphastablerpackagemodelling}. This method is commonly considered to be the fastest and most accurate \cite{Weron2004Computationally}. In the multivariate setup, the proposed random number generator is based on a theoretical foundation that we state next. 

\subsection{The LePage series}

The LePage series~\cite{SamorodnitskyGennady2000Snrp} was first introduced in \cite{LePageRaoul1981CtaS}, and it was suggested as a simulation method in \cite{Weron2004Computationally}. This series allows for simulating a multivariate stable vector with a stability index $\alpha$ and a given spectral measure through the use of the following theorem~\cite[Theorem 4]{BentkusV.2001LSRo}.
\begin{theorem}
\label{thm:spec2}
 Let $\mathbf{V_1},\mathbf{V_2},\cdots$ be independent and identically distributed (i.i.d) random vectors in~$\mathbb{R}^d$ that are drawn according to a spectral measure $\Lambda$ where $\Lambda(\mathbb{S}^{d-1})>0$, i.e.,
 \[  
     \Pr(\mathbf{V}\in A)=\frac{\Lambda(\mathbb{S}^{d-1} \cap A )}{\Lambda(\mathbb{S}^{d-1})} \;\;\forall\;A\in\mathcal{B}(\mathbb{R}^d).
 \] 
 Let $E_1, E_2, \cdots$ be i.i.d. exponential random variables with mean 1, independent of  $\mathbf{V}_1, \mathbf{V}_2, \cdots$, and define
\begin{equation}
\label{LPseries}
\Gamma_k \;=\; \sum_{i=1}^k E_i \qquad
\qquad \mathbf{X}_n= c(\alpha)\sum_{k=1}^{n} \Gamma_k^{-1/\alpha}\,\mathbf{V}_k 
\end{equation}
where  \[
c(\alpha)=\left(\frac{\kappa}{\Lambda(\mathbb{S}^{d-1})}\right)^{-\frac{1}{\alpha}} \qquad \& \qquad \kappa=\begin{cases}
\dfrac{\Gamma(2-\alpha)\,\cos\!\bigl(\tfrac{\pi\alpha}{2}\bigr)}{1-\alpha}
& \alpha\neq1\\[1em]
\dfrac{\pi}{2} 
& \alpha=1.
\end{cases}   . 
\]
\begin{itemize}
\item[$\bullet$] For $0<\alpha<1$, $\mathbf{X}_n$ converges in total variation to a stable vector $\mathbf{X}$ with spectral measure $\Lambda$ and stability index $\alpha$. 
\item[$\bullet$] For $1 \leq \alpha < 2$, whenever
\begin{equation}\label{eq:eq9}
\int_{\mathbb{S}^{d-1}}s\Lambda(ds)=0,
\end{equation}
the random vector $\mathbf{X}_n$ converges in total variation to a stable vector $\mathbf{X}$ with spectral measure $\Lambda$ and stability index $\alpha$.
\end{itemize} 

Whenever there is convergence to $\mathbf{X}$, we write
\[
\mathbf{X}= c(\alpha)\sum_{k=1}^{\infty}\Gamma_k^{-1/\alpha}\,\mathbf{V}_k.
\]
\end{theorem}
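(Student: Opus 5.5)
Since $\Gamma_k=E_1+\cdots+E_k$ are the arrival times of a unit-rate Poisson process on $(0,\infty)$, independent of the i.i.d. marks $\mathbf{V}_k$, the points $\{(\Gamma_k,\mathbf{V}_k)\}$ form a Poisson random measure on $(0,\infty)\times\mathbb{S}^{d-1}$ with intensity $dt\otimes\Lambda(\cdot)/\Lambda(\mathbb{S}^{d-1})$. By the mapping theorem, the points $\mathbf{Y}_k=c(\alpha)\Gamma_k^{-1/\alpha}\mathbf{V}_k$ then form a Poisson random measure $N$ on $\mathbb{R}^d\setminus\{0\}$ whose intensity in polar coordinates $(r,\mathbf{s})$ is
\[
\nu(dr,d\mathbf{s}) \;=\; \frac{\alpha\,c(\alpha)^\alpha}{\Lambda(\mathbb{S}^{d-1})}\, r^{-1-\alpha}\,dr\,\Lambda(d\mathbf{s}) \;=\; \frac{\alpha}{\kappa}\, r^{-1-\alpha}\,dr\,\Lambda(d\mathbf{s}).
\]
This is exactly the L\'evy measure of an $\alpha$-stable vector with spectral measure $\Lambda$. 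The plan is to identify the limit through characteristic functions: I will show that the partial sums converge almost surely, and that the characteristic function of the limit equals \eqref{eq:eq3}. The upgrade to total-variation convergence is handled separately at the end.

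For $0<\alpha<1$, I would start from $\Gamma_k/k\to1$ almost surely (strong law of large numbers). Then $\sum_k\Gamma_k^{-1/\alpha}<\infty$ almost surely, so the series converges absolutely. The limit $\mathbf{X}=\int \mathbf{y}\,N(d\mathbf{y})$ has, by Campbell's formula,
\[
\log\mathbb{E}\, e^{\ii\langle\mathbf{t},\mathbf{X}\rangle}=\int_{\mathbb{S}^{d-1}}\int_0^\infty \bigl(e^{\ii r\langle\mathbf{t},\mathbf{s}\rangle}-1\bigr)\frac{\alpha}{\kappa}r^{-1-\alpha}\,dr\,\Lambda(d\mathbf{s}).
\]
The inner integral is the classical one, $\int_0^\infty(e^{\ii ru}-1)r^{-1-\alpha}dr=-\Gamma(1-\alpha)|u|^\alpha e^{-\ii\,\mathrm{sign}(u)\pi\alpha/2}$. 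Using $\alpha\Gamma(1-\alpha)=\Gamma(2-\alpha)/(1-\alpha)$, this turns into $\kappa\,|u|^\alpha[\cos(\pi\alpha/2)-\ii\,\mathrm{sign}(u)\sin(\pi\alpha/2)]/\cos(\pi\alpha/2)$ divided by $\kappa$, which is $\psi_\alpha(u)$. This is precisely where the normalization $c(\alpha)$ is designed to cancel $\kappa$.

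For $1\le\alpha<2$ the series is not absolutely summable, so I will compensate. I replace $\Gamma_k^{-1/\alpha}$ by $\Gamma_k^{-1/\alpha}-\int_{k-1}^{k}t^{-1/\alpha}dt$ (the usual LePage compensation, with the $k=1$ term treated separately). The resulting series converges almost surely by a Kolmogorov three-series/$L^2$ argument, since $\mathbb{E}|\Gamma_k^{-1/\alpha}-k^{-1/\alpha}|^2=O(k^{-2/\alpha-1})$ is summable for $\alpha<2$. The subtracted terms multiply $\mathbf{V}_k$, and $\mathbb{E}\mathbf{V}_k=\int s\,\Lambda(ds)/\Lambda(\mathbb{S}^{d-1})=0$ by \eqref{eq:eq9}. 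So the compensators form a mean-zero series $\sum_k b_k\mathbf{V}_k$ with $\sum b_k^2<\infty$, which converges almost surely. Hence the uncompensated $\mathbf{X}_n$ converge as well. The characteristic function follows from Campbell's formula with compensator $\ii r\langle\mathbf{t},\mathbf{s}\rangle\mathbf{1}_{r\le1}$, whose contribution vanishes again by \eqref{eq:eq9}. For $\alpha\neq1$ the inner integral gives $\psi_\alpha$ as before, by analytic continuation of the Gamma identity. For $\alpha=1$ one gets $-\frac{\pi}{2}|u|-\ii u\log|u|+\ii C u$, and the linear term cancels under \eqref{eq:eq9}. Together with $\kappa=\pi/2$ this yields $\psi_1$.

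The main obstacle is the strengthening from almost-sure (hence weak) convergence to total-variation convergence. I would write $\mathbf{X}=\mathbf{X}_n+\mathbf{R}_n$, where conditionally on $\Gamma_n$ the tail $\mathbf{R}_n$ is an independent Poisson integral, and use the smoothness of the limit law. The approach I would try first is to condition on $(\Gamma_k,\mathbf{V}_k)_{k>n}$ and on $\Gamma_{n}$, and then exploit the fact that the conditional law of $(\Gamma_1,\dots,\Gamma_{n-1})$ is uniform order statistics on $[0,\Gamma_n]$, hence absolutely continuous. This smoothing should let me bound $\|P_{\mathbf{X}_n}-P_{\mathbf{X}}\|_{TV}$ by a quantity that goes to zero as $\mathbf{R}_n\to0$. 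Failing that, I would simply cite Bentkus et al.\ \cite[Theorem 4]{BentkusV.2001LSRo} for the total-variation rate.
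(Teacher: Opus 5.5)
Your Poisson-mapping argument is sound as far as it goes. The image intensity $(\alpha/\kappa)\,r^{-1-\alpha}dr\,\Lambda(d\mathbf{s})$ is right, the partial sums converge almost surely in both regimes, and the L\'evy--Khintchine computation does give $-\psi_\alpha$ in the exponent. Two slips cancel there: the inner integral is $-\frac{\Gamma(1-\alpha)}{\alpha}|u|^\alpha e^{-\ii\operatorname{sign}(u)\pi\alpha/2}$, and the identity you need is $\Gamma(1-\alpha)=\Gamma(2-\alpha)/(1-\alpha)$, with no factor $\alpha$. Also, for $1\le\alpha<2$ the compensation is superfluous. Conditionally on $(\Gamma_k)$, the sum $\sum_k\Gamma_k^{-1/\alpha}\mathbf{V}_k$ has independent, bounded, mean-zero terms, and $\sum_k\Gamma_k^{-2/\alpha}<\infty$ a.s., so Kolmogorov's criterion applies directly.

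\textbf{The gap is the total-variation claim.} All of the above only gives convergence in distribution, i.e.\ the classical LePage representation. The statement asserts convergence in \emph{total variation}, and that upgrade is precisely what \cite[Theorem 4]{BentkusV.2001LSRo} supplies. The paper gives no proof of its own and simply cites that result, so your fallback citation coincides with the paper. Your own sketch of the upgrade, however, is not a proof. Under your conditioning on $\Gamma_n$ and $(\Gamma_k,\mathbf{V}_k)_{k>n}$, the remainder $\mathbf{R}_n$ is a fixed vector. You would then need $\mathbb{E}\,\|\mu_n-\mu_n(\cdot-\mathbf{R}_n)\|_{TV}\to0$, where $\mu_n$ is the conditional law of $\mathbf{X}_n$. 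Absolute continuity of $\mu_n$ gives nothing quantitative here, for two reasons:
\begin{itemize}
\item You need a modulus of translation-continuity in total variation that is \emph{uniform in $n$}. Because you condition at the moving index $n$, the smoothing kernel itself changes with $n$: it is the law of a sum of $n-1$ i.i.d.\ terms $U_k^{-1/\alpha}\mathbf{V}_k$ with $U_k$ uniform on $[0,\Gamma_n]$. You supply no such uniform bound.
\item $\mu_n$ is absolutely continuous on $L=\operatorname{span}(\operatorname{supp}\Lambda)$ only when $\mathbf{V}_1,\dots,\mathbf{V}_{n-1}$ span $L$. This fails with positive probability for discrete spectral measures.
\end{itemize}

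\textbf{A possible fix (without rates).} Anchor the smoothing to a \emph{fixed} block of points. Fix $m$ and condition on all $\mathbf{V}_k$ and on $E_{m+1},E_{m+2},\dots$. Then $\mathbf{X}_n=G_n(\Gamma_1,\dots,\Gamma_m)$ and $\mathbf{X}=G(\Gamma_1,\dots,\Gamma_m)$, with the \emph{same} absolutely continuous input law for every $n$. The maps converge in $C^1_{\mathrm{loc}}$:
\begin{itemize}
\item values converge by your almost-sure argument;
\item the only $n$-dependent partial derivative is the one in $\gamma_m$, namely $-\frac{c(\alpha)}{\alpha}\sum_{m\le k\le n}(\gamma_m+E_{m+1}+\dots+E_k)^{-1/\alpha-1}\mathbf{V}_k$, which converges absolutely and locally uniformly.
\end{itemize}
Moreover, $DG$ has rank $\dim L$ whenever $\mathbf{V}_1,\dots,\mathbf{V}_{m-1}$ span $L$. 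Pushforwards of a fixed absolutely continuous law under $C^1$-convergent maps whose limit has full-rank Jacobian converge in total variation, by local inversion plus Scheff\'e's lemma. Convexity of the total variation norm under mixing, together with dominated convergence, then gives $\limsup_n\sup_A|\Pr(\mathbf{X}_n\in A)-\Pr(\mathbf{X}\in A)|\le\Pr\bigl(\operatorname{span}(\mathbf{V}_1,\dots,\mathbf{V}_{m-1})\neq L\bigr)$. The right-hand side tends to $0$ as $m\to\infty$.
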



Theorem~\ref{thm:spec2} indicates that sampling a multivariate stable vector can be done by computing the truncated series as in equation \eqref{LPseries}. Since the vector $\mathbf{X}$ resulting from the series has a zero shift vector $\boldsymbol{\mu}^0 = {\bf 0}$~\cite[Remark 1]{BentkusV.2001LSRo}, we simply add $\boldsymbol{\mu}^0$ after sampling $\mathbf{X}_n$ using the LePage series, when needed. The analysis of the truncation error is deferred to Appendix~\ref{app:appendixerror}. 

In the following section, we present algorithms to sample according to well-known spectral measures, namely: Isotropic, Sub-Gaussian, Discrete, and Mixed spectral measures.

\subsection{Common spectral measures}

\subsubsection{Discrete spectral measures}

In the case of a discrete spectral measure with $M$ mass points, we store the point masses in an array $[\mathbf{v}_1, \mathbf{v}_2, \cdots, \mathbf{v}_M]$ and draw randomly with the probability of each vector $\mathbf{v}_{i}$ 
being
\[
\Pr(\mathbf{v}_{i})=\frac{\Lambda(\mathbf{v}_{i})}{\sum_{\ell=1}^{M}\Lambda(\mathbf{v}_{\ell})}.    
\]

\subsubsection{Isotropic spectral measures}
\label{sc:Iso}

For an isotropic spectral measure, we sample the vectors $\{\mathbf{V}_k\}_{1 \leq k \leq n}$ uniformly on $\mathbb{S}^{d-1}$. To this end, several algorithms are available and we refer the reader to~\cite{SchnabelStefan2022Asaf} where various such algorithms are discussed. In our computations, we adopt the method presented in~\cite{MullerMervinE.1959Anoa} due to its simplicity and computational efficiency. Our package sets the spectral measure of the sphere to $1$, and then scales the truncated LePage series by an appropriate factor $q$ to adjust for the desired scale $\gamma$ given by the user: When the spectral measure of the sphere is 1, it can be verified that the characteristic function is given by 
\begin{equation*}
    \phi_{\mathbf{X}}(\mathbf{t})=\exp\left\{-\frac{\Gamma\left(\frac{d}{2}\right)}{\sqrt{\pi}}\frac{\Gamma\left(\frac{\alpha+1}{2}\right)}{\Gamma\left(\frac{\alpha+d}{2}\right)}\|\mathbf{t}\|^{\alpha}\right\}.
\end{equation*}

To achieve a scale $\gamma$, we therefore multiply the truncated series by 
\begin{equation*}
    q=\gamma\left[\frac{\Gamma\left(\frac{d}{2}\right)}{\sqrt{\pi}}\frac{\Gamma\left(\frac{\alpha+1}{2}\right)}{\Gamma\left(\frac{\alpha+d}{2}\right)}\right]^{-\frac{1}{\alpha}}.
\end{equation*}

\subsubsection{Sub-Gaussian (elliptical) spectral measures}
\label{sub_guass_error}

A sub-Gaussian random vector is characterized by a  positive-definite $d\times d$ shape matrix $\Sigma$. If $\vect{X}$ is a unit-scale  isotropic multivariate stable random vector with stability index $\alpha$ then $\vect{Y}=\Sigma^{\frac{1}{2}}\vect{X}$ is a sub-Gaussian random vector with shape matrix $\Sigma$ where $\Sigma^{\frac{1}{2}}$ is the Cholesky decomposition of $\Sigma$~\cite{nolancirc2013}.

One way to sample sub-Gaussian random vectors is by using the definition presented in~\cite[Definition 2.5.1]{SamorodnitskyGennady2000Snrp}. This is adopted whenever the $\alpha$-stable vector is purely sub-Gaussian. Whenever the sub-Gaussian spectral measure is an element of a mixture (see section \ref{mixtures} for mixed spectral measures), then the approach in~\cite[Definition 2.5.1]{SamorodnitskyGennady2000Snrp} is no longer suitable.  Instead, we devise a new approach that is based on the following result: if $\mathbf{X}_n$ is the $n$-terms LePage series of an isotropic vector $\mathbf{X}$ with unit scale, then, by Theorem~\ref{thm:spec2}, $\mathbf{X}_n$ converges in total variation to $\mathbf{X}$, and \begin{equation}
\label{required V}
\mathbf{Y}_n=\Sigma^{\frac{1}{2}}\mathbf{X}_n=c(\alpha)\Sigma^{\frac{1}{2}}\sum_{k=1}^n\Gamma_k^{-\frac{1}{\alpha}}\mathbf{U}_k=c(\alpha)\sum_{k=1}^n\Gamma_k^{-\frac{1}{\alpha}}\mathbf{V}_k \qquad \text{where } \mathbf{V}_k=\Sigma^{\frac{1}{2}}\mathbf{U}_k,
\end{equation}
converges in total variation to $\mathbf{Y}=\Sigma^{\frac{1}{2}}\mathbf{X}$, which is the desired sub-Gaussian random vector. Hence, to sample the vectors $\mathbf{V}_k$ for the sub-Gaussian case, the software samples  $\mathbf{U}_k$ uniformly distributed on the unit sphere and multiplies them by $\Sigma^{\frac{1}{2}}$. 


\subsubsection{Mixed spectral measures}
\label{mixtures}
Now consider the scenario where the spectral measure at hand is a mixture of $L$ different spectral measures: $\Lambda_1, \Lambda_2, \cdots, \Lambda_L$. If $w_1, w_2, \cdots, w_L$ are their respective associated weights, the mixed spectral measure $\Lambda_{\text{mix}}$ is given by 
\[
    \Lambda_{\text{mix}}=\sum_{i=1}^L w_i\Lambda_i.
\]

To sample from this spectral measure, we proceed as follows. 
\begin{itemize}
\item[(i)] For each weight $w_i$ we associate its corresponding probability as \[
    p_i = \frac{w_i}{\sum_{\ell=1}^L w_\ell} \qquad i=1, \cdots, L.
\]
Then, we sort the probabilities in non-decreasing order that we assume --without loss of generality-- to be $p_1, \cdots, p_L$ hereafter.

\item[(ii)] We draw   $U\sim\mathcal{U}[0,1]$, 
\begin{itemize}
\item[$\bullet$] If $U\leq p_1$ sample from $\Lambda_1$, 
\item[$\bullet$] else if $U\leq p_1+p_2$, sample from $\Lambda_2$, 
\item[$\bullet$] else iterate on the remaining spectral measures. 
\end{itemize}
\end{itemize}

This algorithm achieves a sampling from the mixture of $\Lambda_1, \cdots, \Lambda_L$ with probabilities $p_1, \cdots, p_L$ respectively.


\section{Implementation}

Whenever the LePage series is used, the package chooses the number of terms to achieve an MSE less than 0.01 using Theorem~\ref{MSE} in Appendix~\ref{app:appendixerror}. This number is capped at 50,000 terms and a warning is issued whenever this cap is reached. After varying the values of $\alpha$ and the total mass of the unit sphere, a suitable value for the mean square error threshold, one that leads to sufficiently accurate results with reasonable consumption of time and memory is 0.01.
\subsection{Univariate random number generator}
In the univariate case, the software samples $X \sim\mathcal{S}(\alpha,\beta,\gamma,\delta;k)$, and the type 1 parameterization is used by default. A sample call for both parametrizations is shown in the following block of code.
\begin{minted}[fontsize=\small, frame=single, linenos]{python}
import aub_htp as ht

alpha = 1
beta = 0
loc = 0
scale = 1
n = 500000
samples = ht.alpha_stable.rvs(alpha=alpha, beta=beta, loc=loc, scale=scale, size=n)
# For the type 0 parametrization use:
# ht.alpha_stable.with_parameterization("S0")
# samples = ht.alpha_stable.rvs(alpha=alpha, beta=beta, loc=loc, scale=scale, size=n)
\end{minted}

\subsection{Multivariate random number generator}

\paragraph{Isotropic $\alpha$-stable random vector:} An isotropic $\alpha$-stable random vector with characteristic function 
\[
    \phi_{\mathbf{X}}(\mathbf{t})=e^{-\gamma^{\alpha}\|\mathbf{t}\|^{\alpha}
+\ii\mathbf{t}^T\boldsymbol{\mu}^0},
\] 
can be generated using the AUB-HTP. A sample syntax --commented out-- is shown below in the section of the Ellipcitc $\alpha$-stable random vector.
\paragraph{Elliptic $\alpha$-stable random vector:} An elliptic $\alpha$-stable random vector with shape matrix $\Sigma$ with the following characteristic function 
\[
    \phi_{\mathbf{X}}(\mathbf{t})=e^{-(\mathbf{t}^T\Sigma \mathbf{t})^{\frac{\alpha}{2}} +\ii\mathbf{t}^T\boldsymbol{\mu}^0},
\]
can also be generated using AUB-HTP. We note that the elliptic sampler takes the total mass of the unit sphere as input. If the total mass of the sphere is not specified, the software estimates it using Monte Carlo simulations (with $10^6$ samples) as follows:
\begin{equation}
    \label{MC_simulation_mass_estimation}
        \Lambda(\mathbb{S}^{d-1})=\mathbb{E}\left[\|\Sigma^{\frac{1}{2}}\mathbf{U}\|^{\alpha}\right]\approx\frac{1}{n}\sum_{i=1}^n\|\Sigma^{\frac{1}{2}}\mathbf{U}_i\|^{\alpha},
    \end{equation}
where the equality is due to~ \cite[proposition 2.5.8]{SamorodnitskyGennady2000Snrp}, and where  $\mathbf{U}_1,\mathbf{U}_2,\ldots\mathbf{U}_n$ are i.i.d. samples according to $\mathbf{U}\sim\mathcal{U}(\mathbb{S}^{d-1})$. Equation~\eqref{MC_simulation_mass_estimation} is justified by the application of the Weak Law of Large Numbers (WLLN) because the random variable $\|\Sigma^{\frac{1}{2}}\mathbf{U}\|^{\alpha}$ has a finite second moment. Indeed, 
\begin{align}
        \mathbb{E}[\|\Sigma^{\frac{1}{2}}\mathbf{U}\|^{2\alpha}]&=\mathbb{E}\left[\left(\mathbf{U}^T\Sigma^{\frac{1}{2}T}\Sigma^{\frac{1}{2}}\mathbf{U}\right)^{\alpha}\right]\nonumber\\&=\mathbb{E}\left[\left(\mathbf{U}^T\Sigma^T\mathbf{U}\right)^{\alpha}\right]\label{cholesky}\\
&=\mathbb{E}\left[\|\mathbf{U}\|^{2\alpha}_{\Sigma}\right]\label{number}\\&\leq \lambda_{\text{max}}^{\alpha}(\Sigma)\mathbb{E}\left[\|\mathbf{U}\|^{2\alpha}\right]=\lambda_{\text{max}}^{\alpha}(\Sigma)<\infty,\label{eq:ineq}
\end{align}
where equation~\eqref{cholesky} is due to the Cholesky decomposition definition and where in equation~\eqref{number} we used the $\Sigma$-weighted norm $\|\mathbf{x}\|^2_{\Sigma}\triangleq\mathbf{x}^T\Sigma\mathbf{x}$. Finally, the inequality in equation~\eqref{eq:ineq} is due to the fact that $\|\mathbf{x}\|^2_{\Sigma}\leq\lambda_{\max}(\Sigma)\|\mathbf{x}\|^2$ where $\lambda_{\text{max}}(\Sigma)$ is the largest eigenvalue of $\Sigma$, equal to the norm of matrix $\Sigma$~\cite{alma991015644117709436}.

Finally, we note that the MSE resulting from estimating the total mass of the sphere using equation~\eqref{MC_simulation_mass_estimation} is given by~\cite[Theorem 1.1]{NiederreiterHarald1992Rnga}.

    An example code for both isotropic and elliptic cases is shown as follows. \begin{minted}[fontsize=\small,breaklines=true, frame=single, linenos]{python}
import aub_htp as ht
from aub_htp.random import IsotropicSampler, EllipticSampler

alpha = 0.5
gamma = 1
shift = [0, 0]
d = 2
n = 500000
sigma1 = [[2, 0.8], [0.8, 1.5]]
sampler = EllipticSampler(number_of_dimensions=2, alpha=alpha, sigma=sigma1)
samples = ht.multivariate_alpha_stable.rvs(alpha=alpha, spectral_measure_sampler=sampler, shift=shift, size=n)
# For an Isotropic spectral measure use:
# sampler = IsotropicSampler(number_of_dimensions=d, alpha=alpha, gamma=gamma)
# samples = ht.multivariate_alpha_stable.rvs(alpha=alpha, spectral_measure_sampler=sampler, shift=shift, size=n)
\end{minted}
Figures \ref{elliptic1} and \ref{elliptic2} show the scatter plots of the elliptical $\alpha$-stable random vector, where we use the following shape matrices
\[
    \Sigma_1=\begin{pmatrix}
2 & 0.8 \\
0.8 & 1.5
\end{pmatrix} \qquad \Sigma_2=\begin{pmatrix}
2 & -1 \\
-1 & 2
\end{pmatrix}.
\]
Figure \ref{isotropic_one} shows the scatterplot of an isotropical vector associated with the commented code.
\begin{figure}[H]
    \centering
    \begin{subfigure}{0.32\textwidth}
        \centering
        \includegraphics[width=\linewidth]{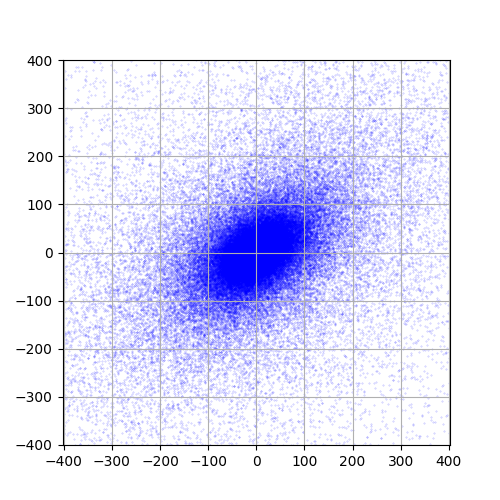}
        \caption{$\alpha=0.5, \Sigma_1, \boldsymbol{\mu}^{0 T}=(0,0)$}
        \label{elliptic1}
    \end{subfigure}
    \hfill
    \begin{subfigure}{0.32\textwidth}
        \centering
        \includegraphics[width=\linewidth]{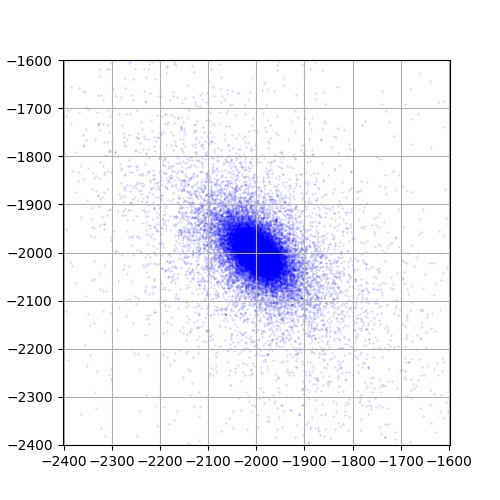}
        \caption{$\alpha=1.1, \Sigma_2, \boldsymbol{\mu}^{0T}=-(2000,2000)$}
        \label{elliptic2}
    \end{subfigure}
    \hfill
    \begin{subfigure}{0.32\textwidth}
        \centering
        \includegraphics[width=\linewidth]{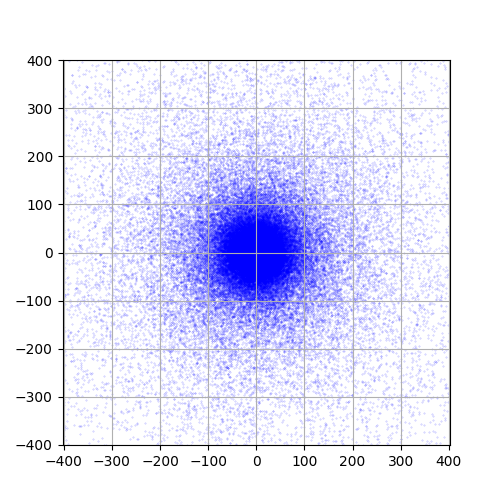}
    \caption{$\alpha=0.5,\gamma=1,\boldsymbol{\mu}^{0 T}=(0,0)$}
    \label{isotropic_one}
    \end{subfigure}
    \caption{Scatter plots each containing $5 \times 10^5$ points of elliptic contoured and isotropic $\alpha$-stable distributions; the plots have been truncated to a range of 400 points within the shift vector.}
    \label{elliptic}
\end{figure}

\paragraph{Discrete spectral measures:} For an $\alpha$-stable random vector that has a discrete spectral measure with point masses at positions $\vect{v}_1, \cdots, \mathbf{v}_M$ and corresponding weights $w_1, \cdots, w_M$, the corresponding characteristic function is given by 
\[
    \phi_{\mathbf{X}}(\mathbf{t})=\exp\left\{-\sum_{\ell=1}^M w_\ell\psi_{\alpha}(\mathbf{t}^T\mathbf{v}_\ell)+\ii\mathbf{t}^T\boldsymbol{\mu}^0\right\}. 
\]  
Such a vector can be generated as follows:
    
\begin{minted}[fontsize=\small,breaklines=true, frame=single, linenos]{python}
import aub_htp as ht
from aub_htp.random import DiscreteSampler

alpha = 0.2
shift = [0, 0]
d = 2
n = 500000
positions = [[1, 0], [0, 1], [-1, 0], [0, -1]]
weights = [1, 1, 1, 1]
sampler = DiscreteSampler(alpha=alpha, positions=positions, weights=weights)
samples = ht.multivariate_alpha_stable.rvs(alpha=alpha, spectral_measure_sampler=sampler, shift=shift, size=n)
\end{minted}
with corresponding scatter plots shown in Figure~\ref{fig:discrete}.

\begin{figure}[H]
    \centering
    \begin{subfigure}[t]{0.32\textwidth}
        \centering
        \includegraphics[width=\linewidth]{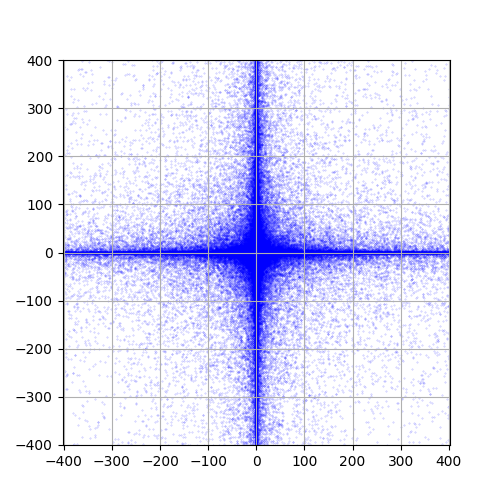}
        \caption{$\text{Masses}=[1,1,1,1], \text{weights}=[(1,0),(0,1),(-1,0),(0,-1)], \alpha=0.2$, $\boldsymbol{\mu}^{0 T}=(0,0)$.}
    \end{subfigure}
    \hfill
    \begin{subfigure}[t]{0.32\textwidth}
        \centering
        \includegraphics[width=\linewidth]{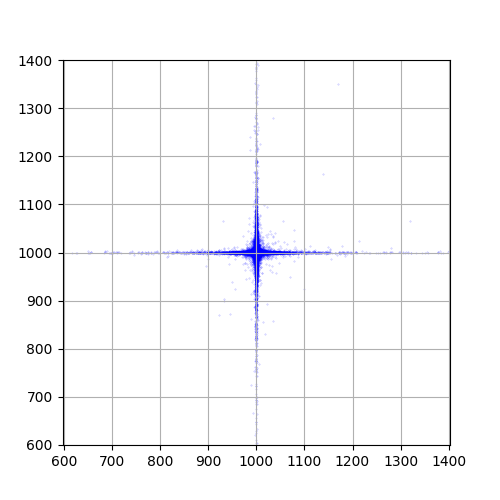}
        \caption{$\text{Masses}=[\frac{1}{4},\frac{1}{4},\frac{1}{4},\frac{1}{4}], \text{weights}=[(1,0),(0,1),(-1,0),(0,-1)], \alpha = 1.2$, $ \boldsymbol{\mu}^{0 T}=(1000,1000)$.}
    \end{subfigure}
    \hfill
    \begin{subfigure}[t]{0.32\textwidth}
        \centering
        \includegraphics[width=\linewidth]{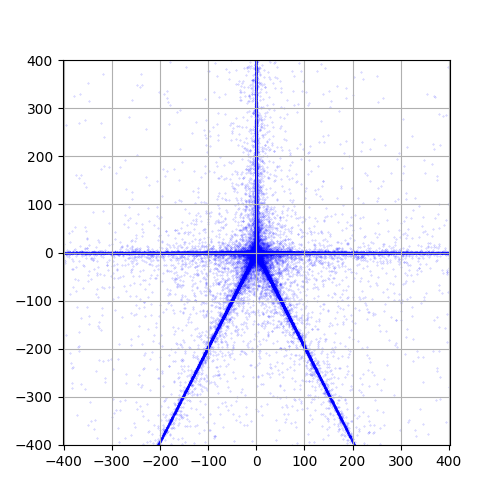}
    \caption{
Masses $=[\frac{1}{5},\frac{1}{5},\frac{1}{5},\frac{1}{5},\frac{1}{5}]$, 
weights $=[(1,0),(0,1),(-1,0),(\cos(\theta),-\sin(\theta)),$ \\ $(\cos(\pi - \theta),-\sin(\pi - \theta))]$, $\theta = \frac{13\pi}{20}$, $\alpha=0.1$,  $\boldsymbol{\mu}^{0T}=(0,0)$.
}
    \end{subfigure}
    \caption{Scatter plots each containing $5 \times 10^5$ points of  $\alpha$-stable distributions with a discrete spectral measure. The plots have been truncated to a range of 400 points within the shift vector. 
    \label{fig:discrete}}
\end{figure}

\paragraph{Mixed spectral measure:} For a mixed spectral measure, let $\Lambda_1, \cdots, \Lambda_L$ be $L$ spectral measures with associated weights $w_1, \cdots, w_L$, and consider their mixture $\Lambda_{\text{mix}} = \sum_{i=1}^L w_i \Lambda_i$. To sample according to this spectral measure, we proceed as follows:

\begin{minted}[fontsize=\small,breaklines=true, frame=single, linenos]{python}
import aub_htp as ht
from aub_htp.random import DiscreteSampler, IsotropicSampler, MixedSampler

alpha = 0.3
shift = [1000, 1000]
d = 2
n = 500000
gamma = 1

# Discrete Spectral Measure
positions = [[1, 0], [0, 1], [-1, 0], [0, -1]]
weights = [1, 1, 1, 1]
sampler1 = DiscreteSampler(alpha=alpha, positions=positions, weights=weights)

# Isotropic Spectral Measure
sampler2 = IsotropicSampler(number_of_dimensions=d, alpha=alpha, gamma=gamma)

# Spectral Measures and Their Associated Weights
spectral_measures = [sampler1, sampler2]
measure_weights = [0.5, 0.5]

mixed_sampler = MixedSampler(spectral_measures=spectral_measures,
                             weights=measure_weights)
samples = ht.multivariate_alpha_stable.rvs(
    alpha=alpha, spectral_measure_sampler=mixed_sampler, shift=shift, size=n)

\end{minted}
where \texttt{spectral\_measures} is a list containing the spectral measures that we are mixing, and \texttt{measure\_weights} contains the associated weight of each measure. The resulting scatter plot is shown in Figure~\ref{mix_isotropic_and_discrete}.

\paragraph{Custom spectral measures:} The common spectral measures presented above are built-in into AUB-HTP. If the user wants to sample from an $\alpha$-stable random vector whose spectral measure is not built-in, this is made possible using a systematic procedure outlined in the following example:  
\begin{itemize}

\item Say the user is interested in sampling a multivariate $\alpha$-stable random vector in dimension $d=2$ with $\alpha=0.3, \boldsymbol{\mu}^0=\mathbf{0}$, and whose spectral measure is uniform over the arcs $\left[-\frac{\pi}{4},\frac{\pi}{4}\right]$ and $\left[\frac{3\pi}{4},\frac{5\pi}{4}\right]$, that is, with probability $0.5$, we  draw a point uniformly on the arc $\left[\frac{3\pi}{4},\frac{5\pi}{4}\right]$, and with probability $0.5$, we draw a point uniformly on the arc $\left[-\frac{\pi}{4},\frac{\pi}{4}\right]$. 

\item We assign equal probabilities to both arcs and the spectral measure of each arc is half the total measure of the sphere. Generally, the probability of each region is the ratio of the spectral measure of the region to the total mass of the sphere. 

\item We create a class of the spectral measure sampler that we want by inheriting from the base class \\ \texttt{BaseSpectralMeasureSampler}. In the class that the user is creating, the function which samples the vectors $\mathbf{V}$ in Theorem \ref{thm:spec2} should be provided. Moreover, the dimension of the desired alpha-stable vector and the total mass of the unit sphere should be specified. The created spectral measure sampler is then passed as an argument to the \texttt{rvs()} method as shown in the following code.
\end{itemize}

The steps are presented below. 
\begin{minted}[fontsize=\small,breaklines=true, frame=single, linenos]{python}
import aub_htp as ht
from aub_htp.random import BaseSpectralMeasureSampler

alpha = 0.3
shift = [0, 0]
d = 2
n = 500000


class ButterflySampler(BaseSpectralMeasureSampler):

    def sample(self, number_of_samples: int, random_state=None):
        p = np.random.rand(number_of_samples)
        theta = np.empty(number_of_samples)

        mask = p <= 0.5
        theta[mask] = np.random.uniform(-np.pi / 4, np.pi / 4, size=mask.sum())
        theta[~mask] = np.random.uniform(3 * np.pi / 4,
                                         5 * np.pi / 4,
                                         size=(~mask).sum())

        x = np.cos(theta)
        y = np.sin(theta)

        return np.column_stack((x, y))

    def dimensions(self) -> int:
        return 2

    def mass(self) -> float:
        return 1.0


samples = ht.multivariate_alpha_stable.rvs(
    alpha=alpha,
    spectral_measure_sampler=ButterflySampler(),
    size=n,
    shift=shift)
\end{minted}

We highlight that the mass of the sphere and the dimension should be specified when implementing the sub-class. The resulting scatter plot is shown in Figure \ref{custom_sp}. It is important to emphasize that when $\alpha \geq 1$, the custom spectral measure should satisfy the condition in equation~\eqref{eq:eq9}.

\begin{figure}[t]
    \centering
    \begin{subfigure}[t]{0.32\textwidth}
        \centering
        \includegraphics[width=\linewidth]{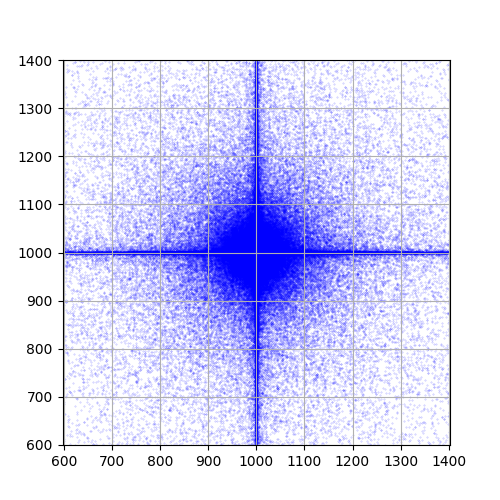}
        \caption{Mix between isotropic and discrete spectral measures.}
        \label{mix_isotropic_and_discrete}
    \end{subfigure}
    \hfill
    \begin{subfigure}[t]{0.32\textwidth}
        \centering
        \includegraphics[width=\linewidth]{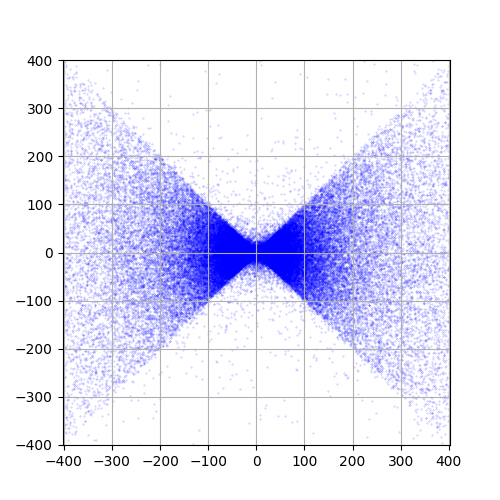}
        \caption{A custom spectral measure}
        \label{custom_sp}
    \end{subfigure}
    \hfill
    \begin{subfigure}[t]{0.32\textwidth}
        \centering
        \includegraphics[width=\linewidth]{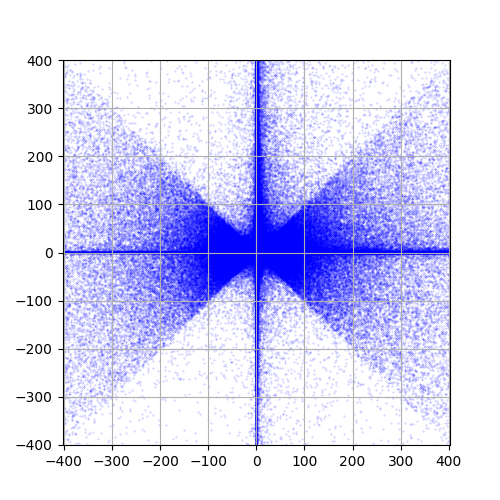}
    \caption{
Mix between custom and discrete spectral measures.}
\label{mix_discrete_and_custom}
    \end{subfigure}
    \caption{Scatter plots showing a custom spectral measure and different mixtures of spectral measures. The plots have been truncated to a range of 400 points within the shift vector. }
    \label{discrete}
\end{figure}


One could mix a custom spectral measure, with another spectral measure. For example, we mix below equally the spectral measure defined above with the discrete spectral measure with $\text{masses}=[(1,0),(0,1),(-1,0),(0,-1)]$ and $ \text{weights}=[1,1,0.25,0.25],$ and the random vector has $\alpha=0.25$, $\boldsymbol{\mu}^{0 T}=(0,0)$.

\begin{minted}[fontsize=\small,breaklines=true, frame=single, linenos]{python}
import aub_htp as ht
from aub_htp.random import BaseSpectralMeasureSampler, DiscreteSampler, MixedSampler

alpha = 0.25
shift = [0, 0]
d = 2
n = 1000000


class ButterflySampler(BaseSpectralMeasureSampler):
    ...
    #implementation in previous block of code.


# Discrete Spectral Measure
positions = [[1, 0], [0, 1], [-1, 0], [0, -1]]
weights = [1, 1, 0.25, 0.25]
sampler1 = DiscreteSampler(alpha=alpha, positions=positions, weights=weights)

#Spectral Measures and Their Associated Weights
spectral_measures = [sampler1, ButterflySampler()]
measure_weights = [0.5, 0.5]

mixed_sampler = MixedSampler(spectral_measures=spectral_measures,
                             weights=measure_weights)
samples = ht.multivariate_alpha_stable.rvs(
    alpha=alpha, spectral_measure_sampler=mixed_sampler, shift=shift, size=n)

\end{minted}

The resulting scatterplot is shown in Figure~\ref{mix_discrete_and_custom}.


\section{Random Variate Generation Validation}

To validate the correctness of our random variate generation, we conducted a projection based goodness of fit test. By the stability property, if $\mathbf{X}$ is an $\alpha$-stable random vector in $\mathbb{R}^d$, then $\boldsymbol{\theta}^T\mathbf{X}$ is an $\alpha$-stable random variable $\forall \boldsymbol{\theta}\in\mathbb{R}^d$ \cite[Theorem 2.1.2]{SamorodnitskyGennady2000Snrp}. One way to test the correctness of our generator is therefore to sample random vectors from AUB-HTP, compute the projection of these vectors on some other vectors, and then test the correctness of the resulting projection values which should theoretically have a univariate $\alpha$-stable distribution. To test the goodness of fit of the resulting projection values, we use the Kolmogorov-Smirnov (KS) test. For details about the formulation of the test, we refer the reader to \cite[Section 10.6]{degroot2018probability}. We use the implementation of the test provided in the STABLE 5.3 Matlab package \cite{robustanalysis}. The test returns the $D$ statistic (see \cite[equation 10.6.2]{degroot2018probability}) and the $p$-value associated with the test. Smaller values of $D$ and larger values of $p$ indicate a good fit. In all of the following tests, we determined the theoretical values of $\beta,\gamma,$ and $\delta$ using \cite[equation. 2.3.3, equation. 2.3.4,  equation. 2.3.5]{SamorodnitskyGennady2000Snrp}; we sampled using AUB-HTP vectors with a mean square error less than 0.001; the dimension of the sampled vector was chosen to be $d=2$; the shift vector was chosen to be $\boldsymbol{\mu}^{0T}=\mathbf{0}$; and 200000 sampled vectors were generated for the test.  We tested two spectral measures, Isotropic and Discrete.
\subsection{Isotropic Test}
 The sampled vectors used in this test have unit scale. Since the spectral measure is isotropic, the distribution of the projected values is invariant to the direction of projection. We therefore project along the $x$-axis and vary only the norm of the projection vector, using $(1,0)$, $(2,0)$, $(3,0)$, $(4,0)$, and $(5,0)$. The resulting one-dimensional projections retain the same stability index $\alpha$ as the original vector, with $\beta=0$, $\delta=0$, and scale parameters $\gamma=1,2,3,4,5$, respectively.

We report the results of the conducted tests for $\alpha=0.6$ and $\alpha=1.1$: the results on all the projected vectors gave $D=0.0021$, $p=0.344$ for  $\alpha=0.6$, and $D=0.0013$, $p=0.9$, for $\alpha=1.1$. Both results indicate good fit.

\subsection{Discrete Test}
In this example, the spectral measure has 4 point masses at $(1,0),(0,1),(-1,0),(0,-1)$ with all masses having weight $0.25$. We project along the vector $(\cos\theta,\sin\theta)$ and we vary $\theta\in\left\{0,\frac{\pi}{4},\frac{\pi}{2},\frac{3\pi}{4},\pi,\frac{5\pi}{4},\frac{3\pi}{2},\frac{7\pi}{4}\right\}.$ Tables \ref{tab:discrete_4mass0.6} and \ref{tab:discrete_4mass1.1} show the theoretical values of $\gamma$ and $\beta$ that the univariate stable projections should have (all projections have $\delta=0$), the $D$ statistic, and the $p$-value. Indeed, we can see that for both values of $\alpha$ and for most projection directions, the $D$ statistic is small and the $p$-value is relatively large, which indicates a good fit. There are a few cases in which the $D$ statistic is small but the $p$-value is small as well. In these cases, albeit the value of $p$ is low, the fit is still deemed satisfactory (by virtue of the low value of $D$). This is due to the known fact that the KS test is highly sensitive at large sample sizes resulting in possibly low $p$-values even when the observed discrepancy is practically negligible. For a more detailed discussion on the sensitivity of the KS test, we refer the reader to \cite[Section 10.6]{degroot2018probability}.
\begin{table}[h]
    \centering
    \begin{minipage}{0.48\textwidth}
        \centering
        \caption{KS Test Results for $\alpha=0.6$}
        \label{tab:discrete_4mass0.6}
        \begin{tabular}{ccccc}
            \toprule
            $\theta$ & $\gamma$ & $\beta$ & $D$ & $p$ \\
            \midrule
            $0$ & $0.314980$ & $0$ & $0.001130$ & $0.960380$ \\
            $\frac{\pi}{4}$ & $0.707107$ & $0$ & $0.001058$ & $0.978556$ \\
            $\frac{\pi}{2}$ & $0.314980$ & $0$ & $0.001915$ & $0.455143$ \\
            $\frac{3\pi}{4}$ & $0.707107$ & $0$ & $0.001633$ & $0.660202$ \\
            $\pi$ & $0.314980$ & $0$ & $0.001130$ & $0.960380$ \\
            $\frac{5\pi}{4}$ & $0.707107$ & $0$ & $0.001058$ & $0.978556$ \\
            $\frac{3\pi}{2}$ & $0.314980$ & $0$ & $0.001915$ & $0.455143$ \\
            $\frac{7\pi}{4}$ & $0.707107$ & $0$ & $0.001633$ & $0.660202$ \\
            \bottomrule
        \end{tabular}
    \end{minipage}
    \hfill
    \begin{minipage}{0.48\textwidth}
        \centering
        \caption{KS Test Results for $\alpha=1.1$}
        \label{tab:discrete_4mass1.1}
        \begin{tabular}{ccccc}
            \toprule
            $\theta$ & $\gamma$ & $\beta$ & $D$ & $p$ \\
            \midrule
            $0$ & $0.532521$ & $0$ & $0.001474$ & $0.777075$ \\
            $\frac{\pi}{4}$ & $0.707107$ & $0$ & $0.001979$ & $0.413518$ \\
            $\frac{\pi}{2}$ & $0.532521$ & $0$ & $0.002252$ & $0.262298$ \\
            $\frac{3\pi}{4}$ & $0.707107$ & $0$ & $0.001771$ & $0.556735$ \\
            $\pi$ & $0.532521$ & $0$ & $0.001474$ & $0.777075$ \\
            $\frac{5\pi}{4}$ & $0.707107$ & $0$ & $0.001979$ & $0.413518$ \\
            $\frac{3\pi}{2}$ & $0.532521$ & $0$ & $0.002252$ & $0.262298$ \\
            $\frac{7\pi}{4}$ & $0.707107$ & $0$ & $0.001771$ & $0.556735$ \\
            \bottomrule
        \end{tabular}
    \end{minipage}
\end{table}

\section{Conclusion}
This paper presented AUB-HTP, a Python package for density computation and random variate generation of $\alpha$-stable distributions. The package is motivated by the practical difficulty of working with $\alpha$-stable laws, whose densities generally lack closed-form expressions and whose multivariate simulation requires careful handling of the underlying spectral measure.

AUB-HTP provides scalar density evaluation through a hybrid strategy that combines input standardization, Zolotarev-type integral representations, series formulas, and numerical inversion of characteristic functions. This design improves robustness across parameter regimes and enables systematic comparisons in terms of accuracy and runtime. The validation results show that the package provides reliable scalar density values and identify parameter regimes in which SciPy's reference implementation appears to suffer from numerical inaccuracies.

The package also implements scalar and multivariate random variate generation. For multivariate $\alpha$-stable vectors, AUB-HTP uses LePage series representations and supports several spectral-measure models, including discrete, isotropic, sub-Gaussian/elliptical, mixed, and custom spectral measures. These features extend the range of $\alpha$-stable models that can be simulated within the Python ecosystem. The truncation-error analysis further provides guidance on the approximation induced by finite LePage-series representations.

By bringing together density evaluation, flexible simulation, validation, runtime analysis, and extensible implementation tools, AUB-HTP aims to make $\alpha$-stable modeling more accessible for computational work involving heavy-tailed data. Future developments may include multivariate PDF computation, CDF evaluation, parameter-estimation routines, further numerical acceleration, and application-oriented modules for communications, signal processing, and machine learning.

\appendix
\section{Error analysis for the LePage series estimate}
\label{app:appendixerror}
In this appendix, we derive an upper bound on the mean square error of the LePage series estimate; we frequently use the following asymptotics of the gamma function from \cite{NIST:DLMF}: Let $x,a,b\in\mathbb{R}$, then we have 
\[
\lim_{x \to \infty} \frac{\Gamma(x+a)}{\Gamma(x+b) \, x^{a-b}} = 1,
\]
then $\forall a,b \in\mathbb{R}$, there exists an $n_0$ such that \begin{equation}\label{eq:eq12}
    \frac{\Gamma(k+a)}{\Gamma(k+b)} <2{k^{a-b}} \qquad \forall k\geq n_0.
\end{equation}
\subsection{The LePage series truncation error}
\begin{theorem}
\label{MSE}
     Let $\mathbf{X}$ and $\mathbf{X}_n$ be as in Theorem~\ref{thm:spec2}, and assume in addition that $\|\mathbf{V}_k\|=1$ for all $k = 1, \cdots, n$. Let $n_0$ be the same as in~\eqref{eq:eq12} then $\forall n\geq \max\{n_0,\frac{2}{\alpha}\}$, $\mathbb{E}\|\mathbf{X}_n-\mathbf{X}\|^2\leq\delta_n$ where  \[
\delta_n=\begin{cases}
 \frac{2\alpha^2}{(1-\alpha)^2}c^{2}(\alpha)n^{2-\frac{2}{\alpha}} &\alpha< 1\\[10pt]
\frac{2\alpha }{2-\alpha}c^2 (\alpha)n^{1-\frac{2}{\alpha}} & \alpha\geq 1.
\end{cases}
\]    
\end{theorem}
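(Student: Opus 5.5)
The plan is to write the truncation error as the tail of the series,
\[
\mathbf{X}-\mathbf{X}_n=c(\alpha)\sum_{k=n+1}^{\infty}\Gamma_k^{-1/\alpha}\mathbf{V}_k ,
\]
and bound its second moment using the Gamma structure of the arrival times. Since $\Gamma_k$ is a sum of $k$ i.i.d.\ unit exponentials, $\Gamma_k\sim\text{Gamma}(k,1)$, so for $k>p$
\[
\mathbb{E}[\Gamma_k^{-p}]=\frac{\Gamma(k-p)}{\Gamma(k)} .
\]
By \eqref{eq:eq12}, this is bounded by $2k^{-p}$ for $k\geq n_0$. The assumption $n\geq 2/\alpha$ guarantees $k>2/\alpha$ for every $k\geq n+1$, so the moments of order $p=2/\alpha$ appearing below are finite.

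\textbf{Case $\alpha<1$.} I would use no cancellation from the $\mathbf{V}_k$, only $\|\mathbf{V}_k\|=1$. The triangle inequality gives $\|\mathbf{X}-\mathbf{X}_n\|\le c(\alpha)\sum_{k>n}\Gamma_k^{-1/\alpha}$. I then bound $\mathbb{E}\bigl(\sum_{k>n}\Gamma_k^{-1/\alpha}\bigr)^2$ by Minkowski's inequality in $L^2$:
\[
\Bigl\|\sum_{k>n}\Gamma_k^{-1/\alpha}\Bigr\|_{L^2}\le\sum_{k>n}\bigl(\mathbb{E}\Gamma_k^{-2/\alpha}\bigr)^{1/2}\le\sqrt{2}\sum_{k>n}k^{-1/\alpha}.
\]
Comparing the sum with an integral gives $\sum_{k>n}k^{-1/\alpha}\le\int_n^\infty x^{-1/\alpha}\,dx=\frac{\alpha}{1-\alpha}\,n^{1-1/\alpha}$. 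Squaring yields exactly $\frac{2\alpha^2}{(1-\alpha)^2}c^2(\alpha)n^{2-2/\alpha}$. This also explains the form of $\delta_n$ in this case: Minkowski contributes the factor $2$, and the integral contributes $\alpha/(1-\alpha)$ squared.

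\textbf{Case $\alpha\ge1$.} Here $\sum_k k^{-1/\alpha}$ diverges, so cancellation is essential and the hypothesis \eqref{eq:eq9} must enter. The $\mathbf{V}_k$ are i.i.d., independent of the $\Gamma$'s, with mean $\int s\,\Lambda(ds)/\Lambda(\mathbb{S}^{d-1})=0$. I expand the square and condition on $\{\Gamma_k\}$:
\[
\mathbb{E}\|\mathbf{X}-\mathbf{X}_n\|^2=c^2(\alpha)\sum_{k,j>n}\mathbb{E}\bigl[\Gamma_k^{-1/\alpha}\Gamma_j^{-1/\alpha}\bigr]\,\mathbb{E}\langle\mathbf{V}_k,\mathbf{V}_j\rangle .
\]
For $k\ne j$ the cross terms vanish because $\mathbb{E}\langle\mathbf{V}_k,\mathbf{V}_j\rangle=\|\mathbb{E}\mathbf{V}\|^2=0$. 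The diagonal terms use $\|\mathbf{V}_k\|=1$. This leaves
\[
c^2(\alpha)\sum_{k>n}\mathbb{E}\Gamma_k^{-2/\alpha}\le 2c^2(\alpha)\sum_{k>n}k^{-2/\alpha}\le 2c^2(\alpha)\frac{n^{1-2/\alpha}}{2/\alpha-1},
\]
which equals $\frac{2\alpha}{2-\alpha}c^2(\alpha)n^{1-2/\alpha}$, as claimed.

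\textbf{Main obstacle.} The delicate step is justifying the interchange of expectation with the infinite tail sum and identifying the $L^2$ limit with the $\mathbf{X}$ of Theorem~\ref{thm:spec2}. That theorem only gives convergence in total variation. I would first apply the bounds above to finite blocks $\sum_{k=n+1}^{m}$, which shows the partial sums are $L^2$-Cauchy and hence converge in $L^2$, so also in probability. Their limit then coincides almost surely with the series limit $\mathbf{X}$. The stated bound follows by Fatou's lemma as $m\to\infty$.

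A second point needs checking: the requirement that the integral comparison and \eqref{eq:eq12} apply from $k=n+1$ onward. This is exactly what the condition $n\ge\max\{n_0,2/\alpha\}$ ensures, using $a=-2/\alpha$, $b=0$ in \eqref{eq:eq12}.
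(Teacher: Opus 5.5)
Your proposal is correct and follows the paper's proof essentially step for step. Your Minkowski bound in $L^2$ is what the paper obtains by expanding the square, applying Cauchy--Schwarz to the cross terms and recombining into $\bigl(\sum_{k>n}\sqrt{\mathbb{E}[\Gamma_k^{-2/\alpha}]}\bigr)^2$. For $\alpha\ge 1$ you use the same orthogonality from $\mathbb{E}[\mathbf{V}]=\mathbf{0}$ and independence from the $\Gamma_k$, then the same Gamma-moment bound \eqref{eq:eq12} and integral comparison. The one addition is your finite-block, $L^2$-Cauchy and Fatou justification for exchanging expectation with the infinite tail sum, which the paper skips. This is added rigor rather than a different route, and it genuinely matters for $\alpha\ge1$: there $\sum_{k>n}\Gamma_k^{-1/\alpha}$ diverges, so Fubini cannot be applied directly to the expanded double series.
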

\begin{proof}
We start with the case $\alpha<1$. We have 
\[
    \|\mathbf{X}_n-\mathbf{X}\|=\left\|c(\alpha) \sum_{k=n+1}^{\infty}\Gamma_{k}^{-\frac{1}{\alpha}}\mathbf{V}_k\right\|\leq c(\alpha) \sum_{k=n+1}^{\infty}\left\|\Gamma_{k}^{-\frac{1}{\alpha}}\mathbf{V}_k\right\| = c(\alpha)\sum_{k=n+1}^{\infty}\Gamma_k^{-\frac{1}{\alpha}},
\]
which implies
\begin{align}
\mathbb{E} \left[ \|\mathbf{X}_n-\mathbf{X}\|^2 \right] & \leq \mathbb{E} \left[ \left(c(\alpha)\sum_{k=n+1}^{\infty}\Gamma_k^{-\frac{1}{\alpha}}\right)^2 \right]
= c^{2}(\alpha)\left(\sum_{k=n+1}^{\infty} \mathbb{E}\left[\Gamma_k^{-\frac{2}{\alpha}}\right]+2\sum_{n<i<j}\mathbb{E}\left[\Gamma_{i}^{-\frac{1}{\alpha}}\Gamma_{j}^{-\frac{1}{\alpha}}\right]\right)\nonumber\\
&\leq c^{2}(\alpha)\left( \sum_{k=n+1}^{\infty} \mathbb{E}\left[ \Gamma_k^{-\frac{2}{\alpha}}\right]+2\sum_{n<i<j}\sqrt{\mathbb{E}\left[\Gamma_{i}^{-\frac{2}{\alpha}}\right]\mathbb{E}\left[\Gamma_{j}^{-\frac{2}{\alpha}}\right]}\right)  \label{eq:CS}\\
&=c^{2}(\alpha)\left(\sum_{k=n+1}^{\infty} \mathbb{E}\left[ \Gamma_k^{-\frac{2}{\alpha}}\right]+\left(\sum_{k=n+1}^{\infty}\sqrt{\mathbb{E}\left[\Gamma_k^{-\frac{2}{\alpha}}\right]}\right)^2-\sum_{k=n+1}^{\infty}\mathbb{E}\left[\Gamma_k^{-\frac{2}{\alpha}}\right]\right)\label{eq:star}\\
&=c^{2}(\alpha)\left(\sum_{k=n+1}^{\infty}\sqrt{\frac{\Gamma(k-\frac{2}{\alpha})}{\Gamma(k)}}\right)^2 \label{eq:miss}\\
&<c^{2}(\alpha) \left(\sum_{k=n+1}^{\infty}\sqrt{2k^{-\frac{2}{\alpha}}}\right)^2 \label{eq:justgamma}\\
&=2c^{2}(\alpha)\left(\sum_{k=n+1}^{\infty}k^{-\frac{1}{\alpha}}\right)^2\nonumber\\
&\leq \frac{2\alpha^2}{(1-\alpha)^2}c^{2}(\alpha)n^{2-\frac{2}{\alpha}} \label{integral_bound}
\end{align}
where equation~\eqref{eq:CS} is due to the Cauchy-Schwarz inequality. Equation~\eqref{eq:miss} is due to the fact that $\Gamma_k\sim\text{Gamma}(k,1),n\geq \frac{2}{\alpha}$, and $\mathbb{E}\left[\Gamma_k^{-\frac{2}{\alpha}}\right]=\frac{\Gamma(k-\frac{2}{\alpha})}{\Gamma(k)}$, and inequality~\eqref{eq:justgamma} is due to~\eqref{eq:eq12}. Inequality \eqref{integral_bound} is due to the fact that \begin{equation}
    \sum_{k=n+1}^{\infty}k^{-\frac{1}{\alpha}}\leq\int_{n}^{\infty}x^{-\frac{1}{\alpha}}dx=\frac{\alpha}{1-\alpha}n^{1-\frac{1}{\alpha}}.
    \end{equation} 
    

For the case $\alpha\geq1$, since $\|\mathbf{V}_k\|=1\;\forall \;k=1,\cdot\cdot\cdot,n$, equation~\eqref{eq:eq9} implies $\mathbb{E}[\mathbf{V}]=\mathbf{0}$. Let $\mathbf{V}_j=(v_j^{1},v_j^{2},\cdot\cdot\cdot,v_j^{d})$ then we have 
\begin{align}
\mathbb{E} \left[ \|\mathbf{X}_n - \mathbf{X}\|^2 \right]
&= c^{2}(\alpha)\,
\mathbb{E}\left[
  \Bigl\langle
    \sum_{\ell=n+1}^{\infty}\Gamma_k^{-\frac{1}{\alpha}}\mathbf{V}_{\ell},
    \sum_{k=n+1}^{\infty}\Gamma_k^{-\frac{1}{\alpha}}\mathbf{V}_k
  \Bigr\rangle
\right]\nonumber\\
&=c^{2}(\alpha)\mathbb{E}\Biggl[\;\sum_{k=n+1}^{\infty}\sum_{\ell=n+1}^{\infty}\sum_{i=1}^{d}\Gamma_k^{-\frac{1}{\alpha}}\Gamma_{\ell}^{-\frac{1}{\alpha}}v_{k}^{i}v_{\ell}^{i}\Biggr]\nonumber\\
&=c^{2}(\alpha)\left[\;\sum_{k=n+1}^{\infty}\sum_{\ell=n+1}^{\infty}\sum_{i=1}^{d}\mathbb{E}\left[\Gamma_k^{-\frac{1}{\alpha}}\Gamma_{\ell}^{-\frac{1}{\alpha}}\right]\mathbb{E}\left[v_{k}^{i}v_{\ell}^{i}]\right]\right] \label{eq:perp}\\
&=c^{2}(\alpha)\sum_{k=n+1}^{\infty}\mathbb{E}\left[\Gamma_k^{-\frac{2}{\alpha}}\right]\mathbb{E}\left[\|\mathbf{V}_k\|^2\right] \label{eq:mean0}\\
&=c^{2}(\alpha)\sum_{k=n+1}^{\infty}\frac{\Gamma(k-\frac{2}{\alpha})}{\Gamma(k)}<2c^{2}(\alpha)\sum_{k=n+1}^{\infty}k^{-\frac{2}{\alpha}}\nonumber\\
&\leq\frac{2\alpha}{2-\alpha}c^{2}(\alpha)n^{1-\frac{2}{\alpha}}, \nonumber
\end{align}
where equation~\eqref{eq:perp} is due to the fact that $\mathbf{V}_k \perp\!\!\!\perp \Gamma_{k}$ and equation~\eqref{eq:mean0} is justified since $v_k^{i} \perp\!\!\!\perp v^{i}_{\ell}$, $\forall \, k\neq\ell$ and $\mathbb{E}[v^{i}_k]=0$, $\forall \,k$. The rest of the equalities and inequalities follow by similar steps as in the case $\alpha<1$. 
\end{proof}

One can clearly see that the bound on the error decays faster for smaller values of $\alpha$. This indicates that this sampling method can appropriately handle the difficult region of small values of $\alpha$. In the one dimensional case, the mean square error of the LePage series can be found in \cite{JanickiAleksander1992Ciot}.


\bibliographystyle{plain}  
\bibliography{ref} 
\end{document}